\newcommand{\mbs}[1]{\pmb{#1}}
\newcommand{\vect}[1]{{\lowercase{\mbs{#1}}}}
\newcommand{\T}{{\scriptscriptstyle\mathsf{T}}}
\newcommand{\cond}{\,\vert\,}
\renewcommand{\Re}[1][]{\ifthenelse{\isempty{#1}}{\operatorname{Re}}{\operatorname{Re}\left(#1\right)}}
\renewcommand{\Im}[1][]{\ifthenelse{\isempty{#1}}{\operatorname{Im}}{\operatorname{Im}\left(#1\right)}}
\newcommand{\pv}{\vect{p}}
\newcommand{\qv}{\vect{q}}
\newcommand{\rv}{\vect{r}}
\newcommand{\sv}{\vect{s}}
\newcommand{\uv}{\vect{u}}
\newcommand{\vv}{\vect{v}}
\newcommand{\xv}{\vect{x}}
\newcommand{\yv}{\vect{y}}
\newcommand{\zerov}{\vect{0}}
\newcommand{\onev}{\vect{1}}
\newcommand{\alphav}{\vect{\alpha}}
\newcommand{\betav}{\vect{\beta}}
\newcommand{\Ic}{{\mathcal I}}
\newcommand{\Jc}{{\mathcal J}}
\newcommand{\Kc}{{\mathcal K}}
\newcommand{\Lc}{{\mathcal L}}
\newcommand{\Mc}{{\mathcal M}}
\newcommand{\Qc}{{\mathcal Q}}
\newcommand{\Rc}{{\mathcal R}}
\newcommand{\Sc}{{\mathcal S}}
\newcommand{\Uc}{{\mathcal U}}
\newcommand{\Xc}{{\mathcal X}}
\newcommand{\Yc}{{\mathcal Y}}
\newcommand{\CN}[1][]{\ifthenelse{\isempty{#1}}{\mathcal{N}_{\mathbb{C}}}{\mathcal{N}_{\mathbb{C}}\left(#1\right)}}
\renewcommand{\P}[1][]{\ifthenelse{\isempty{#1}}{\mathbb{P}}{\mathbb{P}\left(#1\right)}}
\newcommand{\E}[1][]{\ifthenelse{\isempty{#1}}{\mathbb{E}}{\mathbb{E}\left[#1\right]}}
\newcommand{\I}[1][]{\ifthenelse{\isempty{#1}}{\mathbb{I}}{\mathbb{I}\left\{#1\right\}}}
\renewcommand{\det}[1][]{\ifthenelse{\isempty{#1}}{\mathrm{det}}{\text{det}\left(#1\right)}}
\newcommand{\trace}[1][]{\ifthenelse{\isempty{#1}}{\mathrm{tr}}{\text{tr}\left(#1\right)}}
\newcommand{\rank}[1][]{\ifthenelse{\isempty{#1}}{\mathrm{rank}}{\text{rank}\left(#1\right)}}
\newcommand{\diag}[1][]{\ifthenelse{\isempty{#1}}{\mathrm{diag}}{\text{diag}\left(#1\right)}}
\newcommand{\Cov}[1][]{\ifthenelse{\isempty{#1}}{\mathsf{Cov}}{\mathsf{Cov}\left(#1\right)}}
\newtheorem{proposition}{Proposition}%[section]
\newtheorem{theorem}{Theorem}%[section]
\renewcommand{\rv}[1]{{\mathrm{#1}}}
\newcounter{enumi_saved}
\newcommand{\overbar}[1]{\mkern 1.5mu\overline{\mkern-1.5mu#1\mkern-1.5mu}\mkern 1.5mu}
\begin{document}
	%\title{A Novel Transmission Scheme for the $K$-user V2I System with Mixed  CSIT}
	\title{Exploiting Location Information to Enhance Throughput in Downlink V2I Systems}
	%\author{\IEEEauthorblockN{Zheng Li$\stackrel{\ddag}{}$$\stackrel{^\S}{}$, Chao He$\stackrel{\ddag}{}$, Sheng Yang$\stackrel{\ddag}{}$, and Thierry Clessienne$\stackrel{^\S}{}$}
	\author{\IEEEauthorblockN{Zheng Li$\stackrel{\ddag}{}$$\stackrel{^\S}{}$,  Sheng Yang$\stackrel{\ddag}{}$, and Thierry Clessienne$\stackrel{^\S}{}$}
		\IEEEauthorblockA{$^\ddag$ L2S, CentraleSup\'elec-CNRS-Universit\'e Paris-Sud, 91192, Gif-sur-Yvette, France\\
			$^\S$Orange Labs Networks, 92326, Ch\^atillon, France\\
			Email:\{zheng.li, sheng.yang\}@centralesupelec.fr, thierry.clessienne@orange.com}}
	\maketitle
	\begin{abstract}
		Vehicle-to-Infrastructure (V2I) technology, combined with millimeter wave (mmW) networks, may support high data rates for vehicular communication and therefore provides a whole new set of services. 
		However, in dense urban environment, pedestrians or buildings cause strong blockage to the narrow beams at mmW,
		severely deteriorating the transmission rate. In this work, we model the downlink mmW V2I system as a simple
		erasure broadcast channel where the erasure~(blockage) event is considered as the state of the channel. While state
		feedback can be obtained through protocols such as Automatic Repeat reQuest (ARQ), we also assume that the current state can be estimated
		through the location information shared by the communication peers. We evaluate, through an information-theoretic
		approach, the achievable downlink rate in such a system. Despite its highly theoretical nature, our result sheds light on
		how much the location information can contribute to improve the downlink date rate, e.g., as a function of the
		mobility~(velocity) of the vehicles.

		%It turns out that the performance gain can be quite remarkable  
		%, therefore, it is important to design schemes to improve the achievable rate region with the presence of blockage. In this paper, we propose a novel transmission scheme for $K$-user V2I mmWave system and then characterize its achievable rate region. Specially, whether the vehicles receive the signals successfully or not is regarded as Channel State Information at the transmitter (CSIT). Our scheme utilizes both the delayed CSIT of all the previous transmissions by state feedback, and the estimated instantaneous CSIT. Simulation results show that, by exploiting the mixed CSIT, the proposed scheme is capable of remarkably enlarging the achievable rate region compared with that of traditional schemes. For example,  the proposed scheme can bring a maximum symmetric rate gain of $15.32\%$ when $K=2$ and the vehicles are moving with a velocity of $v=60\,\text{km/h}$.
		
	\end{abstract}
	\IEEEpeerreviewmaketitle
	
	\section{Introduction}\label{sec:introduction}
	%Automotive radars, visual cameras and Light Detection and Ranging (LiDAR) devices are most common but essential components found in automotive vehicles \cite{kong2017millimeter,choi2016millimeter}.  By sensing the existence, position, velocity of other vehicles or objects, automotive radars make it possible to realize Adaptive Cruise Control (ACC), highway Automatic Emergency Braking Systems (AEBS), parking assistance, and more. 
	
	The Vehicle-to-Infrastructure (V2I) technology, as an important part of the new-generation vehicular communication
	system, aims to offer high data rate that can fuel a wide range of services. As the conventional 
	wireless microwave bands are becoming increasingly crowded, the millimeter wave~(mmW) spectrum, containing
	a massive amount of raw bandwidth, came to be a promising candidate for the V2I
	systems~\cite{kong2017millimeter,choi2016millimeter}. 
	In addition to accessing larger bandwidths, mmW can also allow very compact antenna arrays to provide high directional beamforming and thus interference reduction due to narrow beams.
	Delivering advanced services to vehicles requires the deployment of mmW microcellular radio sites in the vehicular urban areas.
	
	Compared to the channels at microwave frequencies, however, mmW
	channels are more sensitive, due to the smaller wavelengths, to blockage losses, especially in urban streets where signals are blocked by high buildings, vehicles or pedestrians. 
	Even for high mmW microcellular radio site density, the blockage cannot be completely excluded or predicted
	as the beams are narrow and the vehicles are usually moving. It is reasonable to model the blockage as an erasure
	event~(package erasure or symbol erasure depending on the physical layer) from a higher level. Indeed, such an
	approach enables us to somehow ignore~(and thus be less dependent on) the physical layer aspects and to focus on the
	important system level parameters in order to obtain useful insights.  
	
	\begin{figure}
		\centering
		\includegraphics[width=0.9\columnwidth]{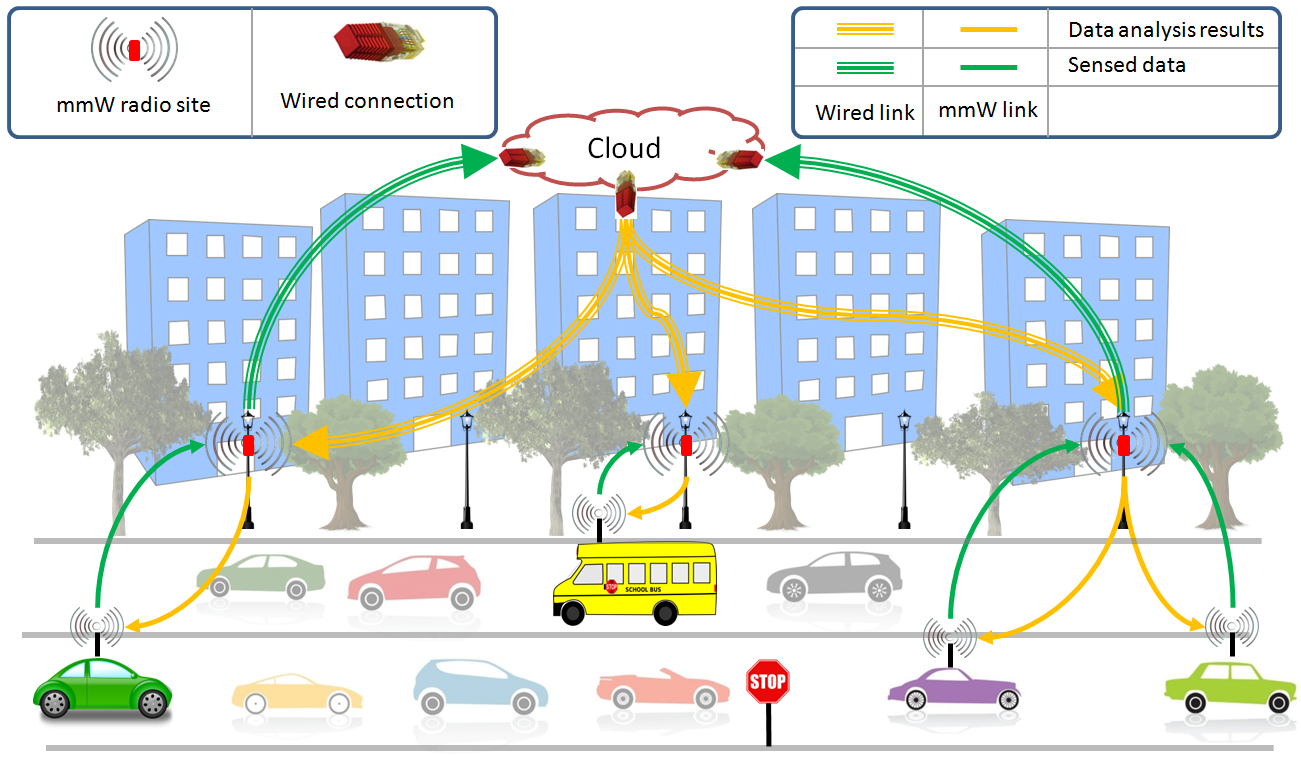}
		\caption{The mmW V2I system under consideration.}\label{fig:background}
	\end{figure}
	
	As an example, let us consider a mmW V2I system as illustrated in Fig.\ref{fig:background}. Here the cloud 
	gathers and processes the information inside~(e.g. traffic data, object recognition tasks) or outside~(e.g. regular
	internet data, multimedia contents) the network, then communicates to $K$ vehicles with their own desired
	messages. 
	%In this paper, we investigate a V2I ITS system with $K$ talking vehicles (see  Fig. \ref{fig:background}). Here the
	%vehicles transmit the information gathered by sensors to a cloud which stores and processes them (data tracking,
	%auditing...) to deliver to the talking vehicles, $K$ independent computing services or instructions (routine
	%choices, small object recognition,...) through the network. To accomplish this, the cloud needs strong computation and data-storage capabilities.
	All the talking vehicles are connected to the cloud through mmW wireless radio sites, which are themselves connected
	to the cloud (the wireless radio sites operating as relays). Here the wired cloud-radio site links are supposed to be
	perfect (without loss of information) while the wireless vehicle-radio site links suffer from blockage. As suggested
	in the previous paragraph, we consider the downlink channel from the cloud to the vehicles as a
	1-to-$K$ Erasure Broadcast Channel (EBC) despite the presence of the wireless radio sites. Then,
	whether a packet or symbol can be successfully decoded by a particular vehicle depends strongly on whether the vehicle is
	in blockage. We refer to the latter as the state of the channel. 
	
	It turned out that if the state is known to the cloud, even with a long delay such that the state is completely
	outdated, such information can still increase substantially the channel
	capacity~\cite{wang2012capacity,gatzianas2013multiuser}, even if this outdated state information is noisy \cite{shayevitz2013capacity,venkataramanan2013achievable} or rate-limited \cite{wu2016coding}. In fact, such binary state information is usually fed back
	to the transmitter, i.e., the cloud, with an ACK/NACK mechanism in practical communication systems, which makes the setup quite
	realistic. In this work, we assume that the cloud transmitter can somehow estimate~(imperfectly) the current state, in
	addition to the perfect information on the past channel states obtained with feedback. Indeed, such an
	assumption can be justified by the fact that the cloud has centralized information on the vehicles that can be
	exploited to predict whether or not the channel is in blockage. One such example is the location information.
	Provided that the location of the vehicles is tracked in real time by the cloud, the channel condition can be
	predicted based on factors such as the propagation environment and the velocity of the vehicles. Hence, our main
	interest in this work is to evaluate the potential throughput gain brought by such additional information.  
	
	The main technical contributions of our work are as follows. First, we propose a new scheme that exploits
	both the current and past state information, and derive the corresponding rate region for  $K$-receiver EBC.
	Second, we show that in the two-receiver case, the proposed scheme is indeed optimal. As compared to a previously
	proposed scheme in \cite{li2017capacity}, the new ingredient is the mixture of private information in addition to the
	separate transmission. Third, we exploit the general result to evaluate the potential gain from the location
	information for a mmW V2I network. Although the theoretical result is well beyond the V2I scenario, we do believe
	that this scenario is one of the few cases for which the underlying assumptions of the general result can be
	realistic. 
	In contrast to previous works on similar setups \cite{heindlmaier2018capacity, pantelidou2009cross} which use queuing-theoretic tools,
	we adopt an Information-Theoretic~(IT) approach to derive the achievable rate region. Our scheme is based
	on standard IT tools such as random coding arguments, block Markov coding, joint source-channel coding and
	typicality decoding. %Despite the highly theoretical nature, the general results are applied to a simple mobility model to provide insights on how much the location information can contribute to improve the downlink date rate, e.g., as a function of the mobility~(velocity) of the vehicles.

	The paper is organized as follows. Section~\ref{sec:system_model}  describes the system model while Section~\ref{sec:achieve} presents the optimal rate region for a two-receiver mmW V2I system and discusses the potential gain from additional location information. The novel scheme for $K$-receiver EBC is explained in Section~\ref{sec:proof}. Finally, Section~\ref{sec:conclusion} concludes the paper. Some details are relegated to appendix. 
	
	\emph{Notation}: Throughout the paper, %we use the following notations. %Logarithm is to the base~$2$. 
	vectors follow the column convention and are in bold letters, e.g., 
	the vectors of ones and zeros are denoted by
	$\onev$ and $\zerov$, respectively. 
	$\Kc := \{1,\ldots,K\}$ is the universe vehicle (receiver) set, calligraphic capitalized letters $\Jc,\Ic,\Lc,\Uc$ represent some subsets of $\Kc$,  we always assume that $|\Jc|,|\Ic|,|\Lc|>1$ (not $|\Uc|$) in this paper. We use $\uv\preceq\vv$ to mean that $u_i\le v_i$,
	$\forall\,i$. %We use the symbol $\otimes$ to indicate that the signal is a mixture.
	
	\section{System Model}\label{sec:system_model}
	
	%\subsection{Problem Description}
	Let us consider the downlink communication of a mmW V2I system, with one cloud transmitter and $K$ vehicle receivers. 
	%, where user $k\in\Kc$ moves with an average speed $v_k$. As aforementioned, there is one cloud in the system, which benefits from big data and strong computation capability. The users transmit the data to the cloud for object recognition, routine planning and so on, the cloud is able to finish the analyses instantaneously and feed the results back. In this paper, we consider the forward-link transmission from the cloud to the users. Now that the cloud accomplishes data analyzing and wishes to send $K$ independent messages to the $K$ users. 
	We assume that at each time slot $t$, the transmitter sends a signal $x_t\in\Xc$ where $\Xc$ is an arbitrary
	alphabet. For instance, if $x_t$ is a packet of $l$ bits, then $\Xc = \mathbb{F}_2^l$. Each receiver $k$ recovers
	$y_{k,t}$ that can be either exactly $x_{t}$ or ``$?$''~(erased). The system is therefore equivalent to a 1-to-$K$ EBC. For convenience, we introduce the state variable
	$S_{k,t} = 0$ for erasure and $S_{k,t} = 1$ otherwise. Let $\rv{S}_t := \bigl[ S_{1,t}, \ldots, S_{K,t} \bigr]$
	denote the global state at time $t$. We assume that the global state is available to the transmitter with one-slot delay, i.e., the transmitter knows perfectly $\rv{S}_{t-1}$ from time slot $t$ on via some feedback mechanism such as Automatic Repeat reQuest (ARQ). In addition, at time $t$, the transmitter can obtain an estimate of current global state
	$\hat{\rv{S}}_t:= \bigl[ \hat{S}_{1,t}, \ldots, \hat{S}_{K,t} \bigr]$ through side information such as the GPS location information collected regularly from the
	vehicles. For tractability, we make the following assumptions. 
	First, the joint process $\{\rv{S}_t, \hat{\rv{S}}_t\}_t$ is stationary in time, with
	joint distribution $\P(\rv{S}_t=s, \hat{\rv{S}}_t=\hat{s})$ for $s,\hat{s}\in\{0,1\}^K$ that does
	not depend on $t$. Define the probability vector $\pv_{s}$ for a given state $s\in \{0,1\}^K$ as
	\vspace{-0.1cm}
	\begin{align}
		\pv_{s} &:= \bigl[ \P(\rv{S}=s, \hat{\rv{S}}=\hat{s}):\quad \hat{s}\in \{0,1\}^K \bigr],
		\label{eq:prob_vec}  
	\end{align}%
	
	\vspace{-0.1cm}
	\noindent such that the marginal distribution is $p_{s} = \pmb{1}^\T \pv_{s}$. We also define
	$\pv_{\bar{s}} := \bigl[ \P(\rv{S}\ne s, \hat{\rv{S}}=\hat{s}):\quad
	\hat{s}\in \{0,1\}^K \bigr]$.   
	Then, we suppose that the following Markov chain holds
	\vspace{-0.1cm}
	\begin{align}
		\cdots \leftrightarrow \rv{S}_{t-1} \leftrightarrow \hat{\rv{S}}_{t}
		\leftrightarrow \rv{S}_{t} \leftrightarrow \hat{\rv{S}}_{t+1} \leftrightarrow \cdots
	\end{align}%
	
	\vspace{-0.1cm}
	\noindent
	In other words, the predictor exploits all the available information in an optimal way to obtain
	$\hat{\rv{S}}_t$, such that given $\hat{\rv{S}}_t$, the original information is irrelevant to estimating
	$\rv{S}_t$.   
	Finally, we also assume that the marginal distributions of the processes $\{\rv{S}_t\}_t$ and
	$\{\hat{\rv{S}}_t\}_t$ are identical. As we shall show later, the performance of the system
	in this abstract model only depends on the set of probability vectors defined in
	\eqref{eq:prob_vec}.  
	%Putting all the assumptions together, we know that given the estimated
	%state process $\{\hat{\rv{S}}_t\}_t$, the true state process is distributed as 
	%\begin{align}
	%\prod_t P(\sv_t\cond\hat{\sv}_t).  
	%\end{align}%
	%Hence, the performance of the system in this abstract model only depends on the joint
	%distribution of the estimated state process $\{\hat{\rv{S}}_t\}_t$ and the conditional
	%distribution $P(\sv_t\cond\hat{\sv}_t)$. 
	To make the model more concrete, let us consider the following toy example.  
	
	\subsection*{Toy Model}
	%according to the independent messages and broadcasts the signal to all the mmWave BSs through wired links. Then the BSs forward the signal to the users, all of which can then decode the signal and obtain the desired message if they receive the signal. After each transmission, the receive situation of the $K$ users is fed back perfectly to the cloud as CSI, which is referred to \label{key}as delayed CSIT hereafter. Besides, GPS informs the cloud about the current locations of the users, from which  the cloud also predicts the current CSI. Our goal is to exploit the mixed CSIT, i.e., the delayed and estimated current CSIT, at the cloud to maximize the forward-link achievable rate region.
	
	%%\subsection{Model Formulation}
	%Since the mmWave BSs work just as relays to forward data and there is no loss between the cloud-BS links, we can regard the whole system as a 1-to-K BC between the cloud and the users, even though the transmission to the users is actually operated by the mmWave BSs. 
	
	Assume that the mmW radio sites are spatially distributed in $\mathbb{R}^2$ as a homogeneous Poisson Point
	Process~(PPP) of density $\lambda$. To model the blockage in mmW, we adopt
	the LoS ball model proposed in \cite{bai2015coverage}, i.e., a vehicle receives the signal
	perfectly within the distance $R_B$ to any wireless radio sites, and is in blockage (signal completely erased) otherwise.
	Thus, we have
	%Let us denote the receive state of user $k\in\Kc$ at time $t$ with  $S_{k,t}$, then  $S_{k,t}=1$ indicates that user $k$ receives the signal perfectly at time $t$, and $S_{k,t}=0$ otherwise. The cloud is apprised of the user locations by GPS, from which the cloud obtains an estimated channel state $\hat{S}_{k,t}$ with
	\vspace{-0.1cm}
	\begin{align}
		\mathbb{P}({S}_{k,t}=0)=\mathbb{P}(\hat{S}_{k,t}=0)=e^{-\lambda A},\label{eq:def_S}
	\end{align} 
	
	\vspace{-0.1cm}
	\noindent where $A:=\pi R_B^2$ is the area of a circle of radius $R_B$. If we assume by simplicity that
	the current state for vehicle~$k$ is uniquely predicted with the GPS location information sent
	by the vehicle, then the actual state may differ from the estimated one depending on the
	velocity of the vehicle. Indeed, with low mobility, the estimated state should be rather
	accurate, whereas with high mobility, the location information becomes completely outdated.
	Let $T_s$ be the delay between the acquisition time of the GPS information and time slot
	$t$, and $v_k$ be the velocity of vehicle $k$. Then for example, $\P(S_{k,t}\!=0|\hat{S}_{k,t}\!=0)$ is the probability that there is no mmW radio sites in the shadowed area shown in Fig. \ref{fig:A_k}. For each vehicle $k$, the distance between the real location and the outdated GPS location is $v_kT_s$, the shadowed area can then be written as $A_k:=\bigl(A+\frac{v_kT_s}{2}\sqrt{4R_B^2-(v_kT_s)^2}-2R_B^2\arccos\big(\frac{v_kT_s}{2R_B}\big)\bigr)^+$,
	which increases with the velocity $v_k$ from $0$ to $A$, and stays at $A$ when $v_kT_s\ge 2R_B$.
	\begin{figure}
		\centering
		\includegraphics[width=0.6\columnwidth]{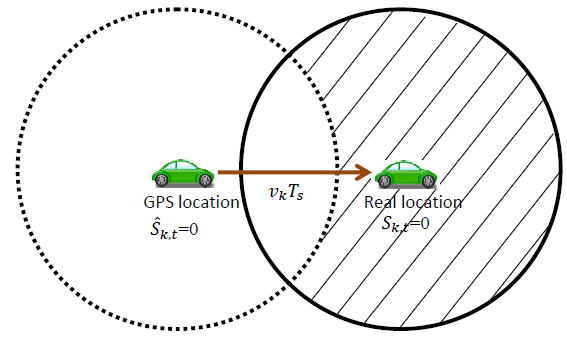}
		\caption{Illustration of the probability $\P(S_{k,t}=0|\hat{S}_{k,t}=0)$, the shadowed area is denoted as $A_k$.}\label{fig:A_k}
		\vspace{-0.1cm}
	\end{figure}
	Following the property of PPP and LoS ball model, we obtain
	\begin{align}
		&\mathbb{P}(S_{k,t}\!=0|\hat{S}_{k,t}\!=0)=e^{-\lambda A_k},\label{eq:def_cond_S1}\\
		&\mathbb{P}(S_{k,t}=0|\hat{S}_{k,t}=1)=\frac{e^{-\lambda A}(1-e^{-\lambda A_k})}{1-e^{-\lambda A}}.\label{eq:def_cond_S2}
	\end{align}
	
	\vspace{-0.1cm}
	\noindent
	Note that the distribution does not depend on the time
	index, which is in accordance with the stationarity of our model. A further simplification is
	to assume that the vehicles are spatially independent, i.e., 
	$\P(\rv{S}\!=\!s, \hat{\rv{S}} \!=\! \hat{s})=\prod\nolimits_{k} {\P(\hat{S}_k =
		\!\hat{s}_k\!)}{\P(S_k\!=\!s_k\cond\hat{S}_k = \!\hat{s}_k\!)},\forall
	s,\hat{s}\in\{0,1\}^{K}$. 
	
	The above toy example is interesting since it boils down the complex V2I system into three
	important features: density of the mmW radio sites deployment $\lambda$, mobility of the vehicles $\{v_k\}$,
	and the timeliness of the location information. 
	While we emphasize that our result is not restricted to the example, we shall use it repeatedly for illustration purpose and to provide useful insights.
	
	%and  $\pv_{\bar{s}} := 1-\pv_{s}$. %Further, define the marginal probability $p_{s} := \P[\rv{S}=s], \forall s\in\Sc$, where $p_s = \onev^\T \pv_{s}$.
	\vspace{-0.1cm}
	\section{Two-receiver V2I system}\label{sec:achieve}
	
	We start with the simplest setting in which there are only two vehicles in the system. In this
	section, we first provide the optimal transmission rate region of the two-receiver EBC in terms of the probability
	vector $\{\pv_s\}$. The formal proof is not given until a later section in which we prove the general $K$-receiver case. Then, we apply the result to derive the maximum
	data rate for the toy model, from which we can appreciate the performance gain brought
	by the location information. 
	\vspace{-0.1cm}
	\subsection{Main results}
	\begin{theorem} %\label{thm:outer_bound_theorem}
		\label{the:two_region}
		The normalized rate pair\footnote{Normalized by $\log|\Xc|$, i.e., measured as symbols per
			channel use.} $({R_1},{R_2})$ for the two-receiver EBC is achievable if and only if 
		for any $\mu \ge 1$, 
		\vspace{-0.1cm}
		\begin{align}
			{R_1}\! +\! \mu {R_2} &\le \onev^\T \max\bigl\{\pv_{\overbar{00}}, \mu
			\,\pv_{01,11}\bigr\},  \label{eq:outer_bound1} %\label{eq:outer_bound_theorem} 
			\\
			{R_2}\! +\! \mu {R_1} &\le \onev^\T \max\bigl\{\pv_{\overbar{00}}, \mu \,\pv_{10,11}\bigr\},  \label{eq:outer_bound2}
		\end{align}%
		
		\vspace{-0.1cm}
		\noindent
		where the maximum between two vectors is component-wise. 
	\end{theorem}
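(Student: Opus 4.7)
The statement has an achievability and a converse part; I treat them separately. For achievability I would specialize the $K$-receiver scheme developed in Section~\ref{sec:proof} to $K=2$. That scheme uses block-Markov coding with three types of signals: pure private information for user~$1$, pure private information for user~$2$, and a ``mixed'' layer carrying XOR-like symbols simultaneously useful to both users. The current-state estimate $\hat{\rv{S}}_t$ selects which layer is transmitted in a given slot (e.g.\ when $\hat{\rv{S}}_t$ predicts that only user~$2$ is likely to receive, one prefers to send a mixed symbol that already helps user~$2$ and is intended to be overheard by user~$1$), while the delayed feedback $\rv{S}_{t-1}$ updates the virtual queues of ``what has been overheard'' and drives the block-Markov chaining. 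The parameter $\mu\ge 1$ parametrises the time-sharing between a purely-separate mode and the mixed mode. Summing the per-slot contributions conditioned on each realization of $\hat{\rv{S}}$ produces exactly the componentwise maximum $\onev^\T\max\{\pv_{\overbar{00}},\mu\,\pv_{01,11}\}$, since for each value of $\hat{\rv{S}}$ one chooses the tighter of the two modes.

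\textbf{Converse.} For the bound on $R_1+\mu R_2$ with $\mu\ge 1$, I would use a genie-enhanced channel argument: provide receiver~$1$ with the entire output sequence of receiver~$2$, so that the enhanced channel is degraded with user~$2$ the weaker user. Let $T^n:=(\rv{S}^n,\hat{\rv{S}}^n)$. Fano's inequality together with this enhancement yields
\begin{align*}
n(R_1+\mu R_2)\le I(W_1;Y_1^n,Y_2^n\mid T^n)+\mu\,I(W_2;Y_2^n\mid T^n)+n\epsilon_n.
\end{align*}
Per slot, the first mutual information is bounded by $\onev^\T\pv_{\overbar{00}}$ (non-erasure to at least one user) and the second by $\onev^\T\pv_{01,11}$ (non-erasure to user~$2$). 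The crucial step is to carry out this bookkeeping slot by slot, conditioning further on $\hat{\rv{S}}_t=\hat s$, and to use the degradedness to align the two negative entropy terms, so that within each $\hat s$ the contribution collapses to $\max\{\P(\rv{S}\ne 00,\hat s),\,\mu\,\P(\rv{S}\in\{01,11\},\hat s)\}$. Summing over $\hat s$ reproduces the right-hand side of \eqref{eq:outer_bound1}; the bound \eqref{eq:outer_bound2} follows by swapping the roles of the two users.

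\textbf{Main obstacle.} The delicate point is tightening the converse to the componentwise maximum rather than the looser separate sum $\onev^\T\pv_{\overbar{00}}+\mu\,\onev^\T\pv_{01,11}$. This demands performing the entropy manipulations slot-wise and conditioned on each realization of $\hat{\rv{S}}_t$, leveraging the Markov chain $\rv{S}_{t-1}\leftrightarrow\hat{\rv{S}}_t\leftrightarrow\rv{S}_t$ to ensure that, given the current predictor, past states add no further information about $\rv{S}_t$. Matching the achievable region to this converse for every $\mu\ge 1$ confirms that all supporting hyperplanes are attained, completing the characterization.
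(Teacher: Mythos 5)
Your overall architecture resembles the paper's: achievability by specializing the $K$-receiver block-Markov scheme to $K=2$, and a converse in the genie-aided spirit of the literature (the paper does not reprove the converse; it invokes \cite{li2017capacity}). However, both halves of your proposal stop exactly where the real work begins. On the achievability side, specializing the scheme does \emph{not} directly produce the right-hand sides of \eqref{eq:outer_bound1}--\eqref{eq:outer_bound2}: it produces the parametric region of Proposition~\ref{pro_twoEBC}, with $R_1\le(\alphav_1+\alphav_{1\otimes2})^\T\pv_{\overbar{00}}$ and $R_2\le(\alphav_2+\alphav_{1\otimes2})^\T\pv_{\overbar{00}}$ subject to the coupling constraints \eqref{eq:def_achivable_region20}--\eqref{eq:def_achivable_region2}, which encode the budget for retransmitting overheard side information in later blocks. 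Your assertion that one can ``choose the tighter of the two modes for each value of $\hat{S}$'' and sum to get $\onev^\T\max\{\pv_{\overbar{00}},\mu\,\pv_{01,11}\}$ is not a proof: the layer chosen in a slot consumes future retransmission resources, so the per-$\hat{s}$ contributions cannot be optimized independently of one another. The missing step is the matching argument: rewrite the outer bound as the union of rectangles $\Rc_{\text{2EBC}}^o(\betav_1,\betav_2)$ in \eqref{eq:outer_2EBC}, show by a case analysis that whenever the cross constraints are violated one may pass to $(\betav_1^*,\betav_2^*)$ that makes them tight without shrinking the region, and then exhibit an explicit feasible time-sharing, e.g. $\alphav_{1\otimes2}^*=\min\{\betav_1^*,\betav_2^*\}$, $\alphav_1^*=\max\{\betav_1^*-\betav_2^*,\zerov\}$, $\alphav_2^*=\max\{\betav_2^*-\betav_1^*,\zerov\}$, and verify \eqref{eq:def_achivable_region20}--\eqref{eq:def_achivable_region2}. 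Without some equivalent of this construction, the claim that the scheme attains every supporting hyperplane is unsubstantiated.

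On the converse, the chain you write, $n(R_1+\mu R_2)\le I(W_1;Y_1^n,Y_2^n\mid T^n)+\mu I(W_2;Y_2^n\mid T^n)+n\epsilon_n$ followed by separate per-slot bounds, only gives the loose sum $\onev^\T\pv_{\overbar{00}}+\mu\,\onev^\T\pv_{01,11}$; moreover, as written the two mutual informations do not even share the entropy terms needed for the ``alignment'' you invoke (the first should be conditioned on $W_2$, so that the negative term $-H(Y_2^n\mid W_2,T^n)$ from the second can be played against $H(Y_1^n,Y_2^n\mid W_2,T^n)$ from the first). Obtaining the componentwise maximum inside the inner product---bounding terms of the form $\mu H(Y_2\mid\cdot)-H(Y_1,Y_2\mid\cdot)$ slot by slot, conditioned on $\hat{S}_t=\hat{s}$, using the Markov structure so that the erasure pattern is conditionally independent of the past given $\hat{S}_t$---is precisely the hard part, and you explicitly defer it (``the crucial step'') rather than carry it out. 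So the proposal is a reasonable roadmap, but it has genuine gaps in both directions: the achievability lacks the region-equivalence construction that the paper supplies in its appendix, and the converse lacks the entropy bookkeeping that the paper outsources to \cite{li2017capacity}.
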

	\begin{proof}
		The converse has been shown in \cite{li2017capacity}. The achievability can be shown in two steps. First we shall
		establish the achievability for the $K$-receiver case as an optimization problem. Then we let $K=2$ and establish
		the equivalence between the two-receiver achievable region and the above region. See the appendix for details. 
		\vspace{-0.1cm}
	\end{proof}

	In the case where the estimate is independent of the true channel state --- it is the case when
	the velocity of the vehicles is larger than $2R_B/Ts$ in the toy model --- the above theorem
	corresponds to the result in \cite{wang2012capacity,gatzianas2013multiuser} where only state
	feedback is exploitable. Indeed, in this case the expression can be simplified since
	each vector in the component-wise max contains identical elements. One can hence swap the inner
	product and the max operation and get the maximum between the marginal probabilities, namely, 
	$\max\bigl\{p_{\overbar{00}}, \mu \,p_{01,11}\bigr\}$ and 
	$\max\bigl\{p_{\overbar{00}}, \mu \,p_{10,11}\bigr\}$ on the right-hand side. Note that
	moving the inner product inside the maximum induces a loss in general when the state estimates
	are useful, i.e., when the components in $\pv_s$ are not identical. The subtle difference
	marks the potential gain that can be exploited using the estimated current state.

	Let us now consider the toy model with spatial independence and symmetric velocity. Then, it is
	not hard to verify that the region in Theorem \ref{the:two_region} is symmetric as well. The
	following proposition provides an explicit expression of the symmetric rate, the maximum rate that both
	vehicles can achieve simultaneously. 
	%and Proposition \ref{pro:two_region_delay}, but we can gain some intuitions from the symmetric rate under certain conditions. If the users move with the same velocity, denote the maximum symmetric rate obtained by our scheme as $R_{\text{mix}}$, and that derived from \eqref{eq:delay_region} as $R_{\text{delay}}$,  then we obtain the following proposition.
	\vspace{-0.1cm}
	\begin{proposition}\label{pro:two_region_example}
		If the probability for the vehicles to be disconnected from the wireless radio sites is within a certain interval, say $1/3<{e^{-\lambda A}}<4/5$ and ${A_k} > {{\ln \left( {\frac{{{e^{\lambda A}} + 1}}{{3{e^{\lambda A}} - {e^{2\lambda A}}}}} \right)} \mathord{\left/
				{\vphantom {{\ln \left( {\frac{{{e^{\lambda A}} + 1}}{{3{e^{\lambda A}} - {e^{2\lambda A}}}}} \right)} \lambda }} \right.
				\kern-\nulldelimiterspace} \lambda }$, then 
		\vspace{-0.05cm}
		\begin{align}
			{R_{\text{sym}}^{\text{mixed}}}\! = \!\frac{{{e^{\lambda A}} \!-\! {e^{\lambda {A_k}}}}}{{2{e^{\lambda(A + {A_k})}}\! +\! {e^{\lambda (2A + {A_k})}}}} \!+ \!\frac{{(1 \!-\! {e^{ - \lambda A}})(1\! +\! {e^{ - \lambda A_k}})}}{{2\! + \!{e^{ - \lambda {A_k}}}}}.\nonumber
		\end{align}
		
		\vspace{-0.05cm}
		\noindent
		In particular, at high mobility, i.e., when $A_k = A$, we have
		\vspace{-0.1cm}
		\begin{align}
			R_{\text{sym}}^{\text{FB}} = \frac{{1 - {e^{ - 2\lambda A}}}}{{2 + {e^{ - \lambda A}}}},\nonumber
		\end{align}
		
		\vspace{-0.1cm}
		\noindent
		where ``FB'' means that only feedback is available.  
	\end{proposition}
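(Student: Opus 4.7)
The plan is to specialize Theorem~\ref{the:two_region} to the toy model under spatial independence and symmetric velocity, and then solve the resulting scalar optimization explicitly. By the receiver-swap symmetry of the joint law of $(\rv{S},\hat{\rv{S}})$, both constraints in Theorem~\ref{the:two_region} coincide in the symmetric case, so the symmetric rate satisfies
\[
R_{\text{sym}}\;=\;\min_{\mu\ge 1}\frac{f(\mu)}{1+\mu},\qquad f(\mu)\;:=\;\onev^\T \max\{\pv_{\overbar{00}},\,\mu\,\pv_{01,11}\}.
\]
Writing $\alpha:=e^{-\lambda A}$ and $\beta:=e^{-\lambda A_k}$ and using \eqref{eq:def_S}--\eqref{eq:def_cond_S2} together with the product form $\P(\rv{S}=s,\hat{\rv{S}}=\hat{s})=\prod_k \P(S_k=s_k,\hat{S}_k=\hat{s}_k)$, I would first compute the four entries of each of $\pv_{\overbar{00}}$ and $\pv_{01,11}$ in closed form in $(\alpha,\beta)$. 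The component-wise maxes switch at the four thresholds $\mu_{\hat s}:=\pv_{\overbar{00}}[\hat s]/\pv_{01,11}[\hat s]$, and between consecutive breakpoints $f$ is an affine function $c_0+c_1\mu$. Since $\frac{d}{d\mu}\frac{c_0+c_1\mu}{1+\mu}=\frac{c_1-c_0}{(1+\mu)^2}$, the objective is monotone on each piece, so its minimum over $[1,\infty)$ is attained at one of these breakpoints.

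The next step is to identify the optimal breakpoint. I claim that under the stated hypotheses the minimizer is $\mu^\star=\mu_{00}=1+\beta$. After verifying the ordering $\mu_{01},\mu_{11}\le\mu_{00}\le\mu_{10}$ by direct comparison (the first two reduce to $\alpha\le\beta$, the last to $\beta(\beta-\alpha)\ge 0$), the decisive sign check occurs on the piece $[\mu_{11},\mu_{00})$, where components $01$ and $11$ have already switched while $00$ and $10$ have not. There $c_0=\pv_{\overbar{00}}[00]+\pv_{\overbar{00}}[10]=\alpha(1-\alpha\beta)$ and $c_1=\pv_{01,11}[01]+\pv_{01,11}[11]=1-2\alpha+\alpha\beta$, so the derivative of $f(\mu)/(1+\mu)$ stays negative on this piece (pushing the minimum up to $\mu_{00}$) iff
\[
\alpha(1-\alpha\beta)\;>\;1-2\alpha+\alpha\beta,\qquad\text{i.e.,}\qquad \beta\;<\;\frac{3\alpha-1}{\alpha(1+\alpha)}.
\]
Taking $-\frac{1}{\lambda}\ln(\cdot)$ converts this to the $A_k$-lower bound in the proposition, and the requirement $\alpha>1/3$ is precisely what makes the right-hand side positive. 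Analogous sign checks on $[\mu_{01},\mu_{11})$ and $[\mu_{00},\mu_{10})$, which become automatic under the stated range of $\alpha$ (hence the upper bound $\alpha<4/5$), confirm that $\mu^\star=1+\beta$ globally dominates the other three candidates.

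Finally, substituting $\mu^\star=1+\beta$ into $f(\mu^\star)/(2+\beta)$ and grouping terms via $\pv_{01,11}[00]+\pv_{01,11}[01]+\pv_{01,11}[11]=(1-\alpha)(1-\alpha+\alpha\beta)$ collapses $f(\mu^\star)$ to $(1-\alpha)(1+\beta)+\alpha\beta(\beta-\alpha)$; dividing by $2+\beta$ and re-expressing in the exponential variables yields the claimed closed form for $R_{\text{sym}}^{\text{mixed}}$. The high-mobility limit $A_k=A$, i.e.\ $\beta=\alpha$, makes $\hat{S}_k$ statistically independent of $S_k$; all four breakpoints then coincide at $1+\alpha$, the first summand vanishes, and the expression collapses to $R_{\text{sym}}^{\text{FB}}=(1-\alpha^2)/(2+\alpha)=(1-e^{-2\lambda A})/(2+e^{-\lambda A})$, as stated. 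The main obstacle is precisely this global case analysis: while each individual sign check is a brief algebraic manipulation, establishing that $\mu^\star=1+\beta$ is the global minimizer requires carefully ordering all four breakpoints and verifying the sign of $(c_1-c_0)$ on three adjacent pieces, and the stated parameter bounds encode exactly the regime in which this specific configuration holds.
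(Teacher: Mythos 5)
Your route is the same as the paper's: write the symmetric rate as \eqref{eq:def_Rsym}, observe that the objective is piecewise monotone in $\mu$ so the minimizer lies among the four component ratios $\pv_{\overbar{00}}[\hat s]/\pv_{01,11}[\hat s]$, identify the relevant breakpoint under the stated hypotheses, and evaluate. Your execution is in fact more complete than the paper's sketch (which only lists the four candidate values and asserts which one is smallest): with $\alpha=e^{-\lambda A}$, $\beta=e^{-\lambda A_k}$ I can confirm your orderings ($\mu_{01},\mu_{11}\le\mu_{00}=1+\beta\le\mu_{10}$, reducing to $\alpha\le\beta$ and $\beta(\beta-\alpha)\ge0$), the coefficients $c_0=\alpha(1-\alpha\beta)$, $c_1=1-2\alpha+\alpha\beta$ on the piece left of $\mu_{00}$, the equivalence of $c_1<c_0$ with $\beta<\tfrac{3\alpha-1}{\alpha(1+\alpha)}$, i.e.\ the $A_k$-condition of the proposition, and the collapse $f(1+\beta)=(1-\alpha)(1+\beta)+\alpha\beta(\beta-\alpha)$. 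One small inaccuracy: the sign checks on the adjacent pieces are automatic for \emph{all} $0<\alpha<1$, $\beta\in[\alpha,1]$ (e.g.\ on $[\mu_{00},\mu_{10})$ one gets $c_1-c_0=(1-\alpha)(1-2\alpha)+\alpha\beta(1-\alpha\beta)>0$, and on $[\mu_{01},\mu_{11})$ the margin is $(1-\alpha)^2+\alpha^2(1-\beta^2)>0$), so attributing them to $\alpha<4/5$ is not right; the bounds on $\alpha$ rather serve to make the $A_k$-condition meaningful (positivity of $3e^{\lambda A}-e^{2\lambda A}$ requires $\alpha>1/3$, and compatibility with $A_k\le A$ further restricts $\alpha$).

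The substantive issue is your last step, where you assert that $\bigl[(1-\alpha)(1+\beta)+\alpha\beta(\beta-\alpha)\bigr]/(2+\beta)$ ``yields the claimed closed form.'' It does not, as printed: your first term equals
\begin{align}
\frac{\alpha\beta(\beta-\alpha)}{2+\beta}=\frac{e^{\lambda A}-e^{\lambda A_k}}{2e^{2\lambda(A+A_k)}+e^{\lambda(2A+A_k)}},\nonumber
\end{align}
whereas Proposition~\ref{pro:two_region_example} has $2e^{\lambda(A+A_k)}$ in the denominator; the two differ whenever $A_k<A$. A numerical check indicates the discrepancy lies in the printed statement rather than in your algebra: take $e^{-\lambda A}=0.5$, $e^{-\lambda A_k}=0.6$ (the hypotheses hold). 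Direct minimization of \eqref{eq:def_Rsym} gives $0.83/2.6\approx0.319$, attained at $\mu=1.6=1+\beta$, which matches your formula and saturates the Theorem~\ref{the:two_region} constraint $R_1+\mu R_2\le\onev^\T\max\{\pv_{\overbar{00}},\mu\pv_{01,11}\}$ at $\mu=1.6$; the proposition's printed expression evaluates to $\approx0.333$, which would violate that very constraint, so it cannot be the symmetric capacity (it appears to be a typo in the exponent, and the appendix expression \eqref{eq:Rsym3} inherits it). So your derivation is essentially correct and follows the paper's approach, but you should state the mismatch explicitly (and verify it as above) instead of claiming agreement with the printed formula; note also that the high-mobility check $\beta=\alpha$ cannot detect this difference, since the first term vanishes there.
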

	A sketch of proof is provided in the appendix. Remarkably, the second term in $R_{\text{sym}}^{\text{mixed}}$ is not less than
	$R_{\text{sym}}^{\text{FB}}$ in all cases. 
	Since the first term in $R_{\text{sym}}^{\text{mixed}}$ is always non-negative and is strictly
	positive when $v_kT_s<2R_B$, it can be regarded as the net performance gain from the location information.

	%\begin{remark1}
	%Proposition \ref{pro:two_region_example} is an intuitive example to illustrate that the proposed scheme can improve the rate region. Actually, when the users are not moving fast, our scheme can always increase the rate region  through numerical evaluation. However, it is hard to give an explicit solution even for the symmetric rate under general conditions. While the users are in high mobility, $\hat{S}$ becomes independent of $S$, our scheme yields the same region as the ones with only delayed CSIT.
	%\end{remark1}
	
	As a numerical example, Fig.~\ref{fig:region_60km} shows the rate regions for the toy model of different symmetric velocities, with  wireless radio site density
	$\lambda=4/\text{km}^2$, delay
	$T_s=10\text{\,s}$ and LoS range $R_B=0.2\text{\,km}$ \cite{bai2015coverage}. 
	When $v=0$, the current state information is perfectly known, while if $v\ge 2R_B/T_s = 144 \text{km/h}$, the state
	information is completely outdated. When the two vehicles are moving at a moderate speed of $60\text{\,km/h}$,
	the additional location information can provide a gain of $15.32\%$ on the symmetric rate. 
	Nevertheless, we see that even at high speed, the scheme still outperforms the orthogonal
	access, e.g., Time Division Multiple Access~(TDMA) strategy, using only outdated state information. From another perspective, the location information can help to reduce the infrastructure costs in low-medium velocity region. In Fig. \ref{fig:gain}, we plot the minimum  radio site density required to achieve a target symmetric rate ${R_{\text{sym}}^{\text{T}}}$ as a function of the velocity. For example, to achieve $\!{R_{\text{sym}}^{\text{T}}}=\!0.4$, when only feedback is available, it requires that $\lambda\approx\!9.8/\text{km}^2$, while $\lambda$ reduces to $8.2/\text{km}^2$ if both vehicles are moving at a velocity of $60\text{\,km/h}$. We can either deploy less  wireless radio sites or shut down some, which can cut down $16.6\%$ expenses.
	
	\begin{figure*}[!ht]
		\begin{minipage}{0.5\textwidth}
			\setlength{\abovecaptionskip}{0pt}
			\setlength{\belowcaptionskip}{0pt}
			\centering
			\includegraphics[height=6cm]{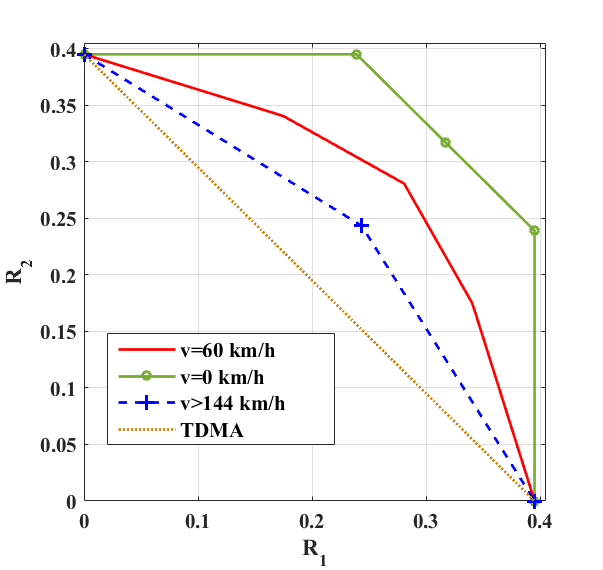}
			\caption{Optimal rate regions for the two-receiver V2I system,  with \protect\\$v_1=v_2=v$, $R_B=0.2\text{\,km}$, $T_s=10\text{\,s}$, $\lambda=4/\text{km}^2$.}\label{fig:region_60km}
		\end{minipage}
		\begin{minipage}{0.5\textwidth}
			\setlength{\abovecaptionskip}{0pt}
			\setlength{\belowcaptionskip}{0pt}
			\flushleft
			\includegraphics[width=8.6cm]{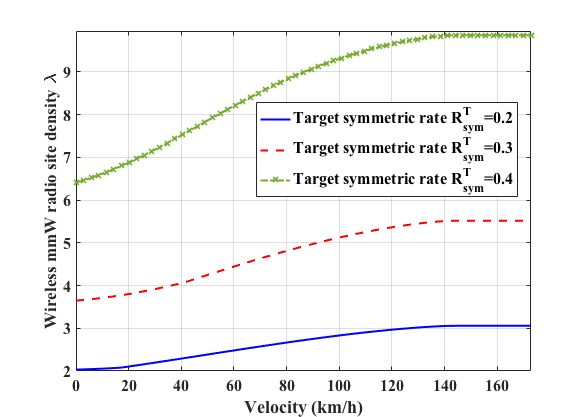}
			\caption{Minimum wireless radio site density required to achieve a target rate  ${R_{\text{sym}}^{\text{T}}}$ as a function of the symmetric velocity, with $R_B=0.2\text{\,km}$, $T_s=10\text{\,s}$.}\label{fig:gain}
		\end{minipage}
		\vspace{-0.15cm}
	\end{figure*}
	\vspace{-0.2cm}
	\subsection{(Informal) Scheme description}
	To illustrate the main idea behind the scheme, we first give the following high level description. The formal proof
	is deferred to the next section for the general $K$-receiver case. 
	
	The transmission of the proposed scheme consists of $B+1$
	blocks. In each block, the transmitter sends two types of information in two phases: private and common. 
	Let $V_{1,b}$ and $V_{2,b}$, $b=[1:B]$, be the private signals intended for receivers 1 and 2, respectively, at block
	$b$. Then the first phase is designed to send $V_{1,b}$ and $V_{2,b}$ in three different ways: $V_{1,b}$, $V_{2,b}$,
	and $V_{1,b}\otimes V_{2,b}$, where $\otimes$ simply means that it is some mixture between $V_1$ and $V_2$. The
	transmission of the first phase generates some side information that needs to be resent as the common information in
	the next block, $b+1$. In other words, the second phase of block $b$ is designed to send $V_{0,b}$ containing useful side information about the first phase in block $b-1$. The side information is generated only for a certain
	combination of transmitted signal and channel state, as shown in Table~\ref{tab:SI}.
	
	Note that the side information is not retransmitted directly, it needs to be compressed with distributed compression
	before the retransmission. Further, the side information is either useful to both receivers or useful to one
	receiver but is known to the other one (thus cost nothing to retransmit after distributed compression). More details are
	given in the next section. 

	Repeat the above steps for $B$ blocks, in block $B+1$, in order to
	recover the lost private signal of block $B$, the cloud will only send
	the common signal $V_{0,B}$. After the transmission, the receivers begin
	to decode individually using all the signals they have received.
	Receiver $k$ first decodes $V_{0,B}$ according to the received signal in
	block $B+1$, then uses the side information contained in $V_{0,B}$ to
	recover the private message $V_{k,B}$. This private message, together
	with the received signal in block $B$ can help decode $V_{0,B-1}$. This
	backward decoding process continues until the first block, at which
	point the whole message is recovered. 
	To summarize, in each block, there are two phases resulting four kinds of signal:
	$V_{0}$, $V_1$, $V_2$, and $V_1\otimes V_2$. To guarantee the successful decoding of the original message, the four
	signals need to be ``scheduled'' in an optimal way. As a matter of fact, the knowledge of the current state from the
	location information helps improve such scheduling and as a result enhances the transmission rate.  
	
	%We use $\alphav$'s to adjust the time ratios for different combinations of signals. Since the cloud has an estimation of the current CSI, it can control the ratios with the estimated CSI, which is why $\alphav$'s are vectors.
	
	\section{The General $K$-receiver Scheme}\label{sec:proof}
	In this section, we describe the proposed scheme in detail and derive an achievable $K$-receiver EBC rate region. 
	
	First, we ignore the memory in the channel and assume that the process
	$\{S_t, \hat{S}_t\}_t$ is i.i.d.~over time, i.e., we only care about the
	correlation between $S_t$ and $\hat{S}_t$. Using a similar argument as
	given in \cite[Ch.7, Remark~7.4]{el2011network}, such an assumption does
	not induce rate loss\footnote{The bottom line is the communicating nodes
		can always choose to ignore temporal correlation.}. Then, the 1-to-$K$ EBC in Section \ref{sec:system_model} can be related to a
	special case of the general $K$-receiver discrete memoryless
	state-dependent BC $(\Xc\times\Sc\times\hat{\Sc},
	p(y|x,s)p(s,\hat{s}),\prod\nolimits_k \Yc_k)$ with probability mass
	function~(pmf)
	\vspace{-0.1cm}
	\begin{align}\label{eq:system_model}
		\textstyle\prod_{t = 1}^n { p( y_{1t},\ldots,y_{Kt}| x_t,{s_t}} )
		p(s_{t},\hat{s}_t),
	\end{align}
	
	\vspace{-0.1cm}
	\noindent
	where ${\sv} := \{s_t\}_t \in {\mathcal{S}^n}, \hat{\sv} :=\{\hat{s}_t\}_t \in
	{\hat{\Sc}^n}, \xv := \{x_t\}_t \in \Xc^n, \yv_{k}:=\{y_{k,t}\}_t \in\Yc_k^n$ are the
	sequences of the channel state, the estimated channel state, the channel input, and the channel
	output at receiver $k$ for $n$ time slots.  At time $t$, the past channel states,
	denoted by $\sv^{t-1}$, 
	is available to the transmitter perfectly, whereas the estimated current state $\hat{s}_t$
	can be obtained non-causally. At the end of the transmission of $n$ symbols, both $\sv$ and $\hat{\sv}$ are known to all the
	receivers for decoding. In particular, for the transmission of message $m_k$ to receiver $k\in\Kc$, with $m_k \in \Mc_k
	:= [1:2^{n R_k}]$, the encoding functions are $\{\phi_{t}: \Mc_1 \times \ldots\times \Mc_K \times \Sc^{t-1} \times \hat{\Sc}_{t} \rightarrow
	\Xc\}_{t=1}^n$, while the decoding function at receiver $k$ is $\varphi_{k}:\Yc_k^n \times \Sc^n \times \hat{\Sc}^n \rightarrow
	\Mc_k$. We say that the rate tuple $(R_1,\ldots,R_K)$ is achievable if
	the probability of error of each receiver goes to $0$ when $n\to\infty$.
	For the EBC we have $\Sc=\hat{\Sc}:=\{0,1\}^K$ and that the outputs are deterministic functions of the input given the state. Therefore,
	$\Yc_k=\Xc\cup\{?\}, k\in\Kc$. 
	
	%A general $K$-receiver EBC inner bound is stated below.
	\begin{theorem}\label{thm:general_region}
		A rate tuple $(R_1,\ldots,R_K)$ is achievable for the $K$-receiver EBC if 
		\begin{align}
			&R_k \leq I(V_k;Y_k,\{\hat{Y}_{{\Jc}}\}_{k\in\Jc,\Jc\subseteq \Kc}\cond S,Q,\hat{S}),\quad k\in\Kc\label{eq:general_region1}\\\vspace{-0.1cm}
			&\mathop {\max }\limits_{k,\Jc,k\in\Jc\hfill\atop\Jc\subseteq\Kc}\biggl\{ I(\hat{Y}_{{\Jc}};V_{{\Jc}},\{V_{k'}\}_{k'\in \Kc},\{V_{{\Ic} }\}_{\Ic \subset\Jc}\cond Y_k, S,Q,\hat{S})-\nonumber\\
			& I(V_{{\Jc}}; Y_k,  \{\hat{Y}_{{\Lc}}\}_{\Lc \supset\Jc}\cond V_k,\{V_{{\Ic}}\}_{\Ic \subset\Jc, k\in\Ic}, S,Q,\hat{S})\biggl\}\le0,\label{eq:general_region2}
		\end{align}
		for some pmf
		\begin{align}
			&p(x\cond\{v_k\}_{k\in \Kc},\{v_{{\Jc}}\}_{\Jc\subseteq\Kc},q) p(q\cond\hat{s})\textstyle\prod_{k = 1}^{{K}} {p(v_k\cond\hat{s})}\times\label{eq:pmf}\\\vspace{-0.1cm}
			&\textstyle\prod_{j=2}^{K}\textstyle\prod_{\Jc\subseteq \Kc,|\Jc|=j} p(v_{{\Jc}}\cond\hat{s})p(\hat{y}_{{\Jc} }|\{v_{{\Ic}}\}_{\Ic \subseteq\Jc},\{v_k\}_{k\in\Kc},q,s),\nonumber 
		\end{align}
	\end{theorem}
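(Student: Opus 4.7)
The plan is to prove achievability via a block-Markov random coding scheme combined with Wyner--Ziv distributed compression and backward decoding, generalizing the two-receiver construction of \cite{li2017capacity}. Transmission spans $B{+}1$ super-blocks of $n$ channel uses. In each block $b \in [1{:}B]$ the transmitter conveys fresh content through the ``private'' auxiliaries $\{V_k\}_{k\in\Kc}$ and the ``mixed'' subset auxiliaries $\{V_{\Jc}\}_{|\Jc|\ge 2}$, and simultaneously piggybacks on them a compressed description of the side information generated in block $b-1$; block $B+1$ transmits only the compressed description of block $B$. Letting $B\to\infty$ at the end absorbs the $1/(B+1)$ rate loss of the terminating block.

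For each block, conditioned on $\hat{\sv}$ and $Q^n$, I would independently generate codebooks for every $V_k$ and every $V_{\Jc}$ according to the product pmf \eqref{eq:pmf}. For each subset $\Jc$ with $|\Jc|\ge 2$ I build a Wyner--Ziv covering codebook of $2^{n\hat{R}_{\Jc}}$ sequences $\hat{Y}_{\Jc}^n$ and randomly bin them into $2^{nR'_{\Jc}}$ bins. During encoding, after the realized $\sv_{b-1}$ arrives via the one-slot-delayed feedback, the transmitter picks, for each $\Jc$, a covering codeword jointly typical with the block-$(b-1)$ auxiliaries $(V_{\Jc}^n,\{V_{k'}^n\}_{k'\in\Kc},\{V_{\Ic}^n\}_{\Ic\subset\Jc})$ and state sequences $(Q^n,\sv_{b-1},\hat{\sv}_{b-1})$, and conveys its bin index inside the $V_{\Jc}$ transmission of block $b$.

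Receivers decode backwards from block $B+1$. At block $b{+}1$, receiver $k$ jointly typically decodes the bin indices of $\{\hat{Y}_{\Jc}\}_{k\in\Jc}$ carried therein, and recovers the actual covering codewords of block $b$ by combining each bin index with $Y_k^n$ of block $b$ as Wyner--Ziv side information. The covering lemma demands
\begin{align}
\hat{R}_{\Jc}\ge I\bigl(\hat{Y}_{\Jc}; V_{\Jc},\{V_{k'}\}_{k'\in\Kc},\{V_{\Ic}\}_{\Ic\subset\Jc}\cond Y_k,Q,S,\hat{S}\bigr),\nonumber
\end{align}
while the residual bin rate $R'_{\Jc}$ must fit inside the channel-coding margin $I(V_{\Jc};Y_k,\{\hat{Y}_{\Lc}\}_{\Lc\supset\Jc}\cond V_k,\{V_{\Ic}\}_{\Ic\subset\Jc,k\in\Ic},Q,S,\hat{S})$ available to $V_{\Jc}$ on top of the fresh message it carries. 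Taking the worst pair $(k,\Jc)$ and eliminating $(\hat{R}_{\Jc},R'_{\Jc})$ by Fourier--Motzkin yields exactly \eqref{eq:general_region2}. Once all $\hat{Y}_{\Jc}^n$ with $k\in\Jc$ are in hand, receiver $k$ decodes $V_{k,b}^n$ from $(Y_k^n,\{\hat{Y}_{\Jc}^n\}_{k\in\Jc})$ by the packing lemma, which succeeds under \eqref{eq:general_region1}.

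The main obstacle is the layered covering/binning across the $2^K\!-\!K\!-\!1$ subsets: codebooks for different $\Jc$ are coupled through the shared auxiliaries $\{V_{\Ic}\}_{\Ic\subset\Jc}$ and $\{V_{k'}\}_{k'\in\Kc}$, and each receiver must recover all of $\{\hat{Y}_{\Jc}\}_{k\in\Jc}$ jointly without error propagation across subsets and across blocks. The delicate book-keeping is (i) fixing an inductive decoding order from the largest $\Jc$ inward, so that the conditioning on $\{\hat{Y}_{\Lc}\}_{\Lc\supset\Jc}$ in \eqref{eq:general_region2} is already resolved when $\hat{Y}_{\Jc}$ is treated, (ii) verifying that the Markov and conditional-independence structure of \eqref{eq:pmf} is preserved at every such step, and (iii) confirming that the single max-over-$(k,\Jc)$ constraint genuinely covers every subset-receiver pair simultaneously without double counting.
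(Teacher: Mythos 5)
There is a genuine gap, and it is precisely the point the paper's scheme is built around. You compress $\hat{Y}_{\Jc}$ with a Wyner--Ziv codebook, bin it, and send the bin index as a common channel message inside $V_{\Jc}$. With a single pair $(\hat{R}_{\Jc},R'_{\Jc})$ per subset, the binning must work for \emph{every} receiver $k\in\Jc$ (bins are fixed before transmission, so the bin rate must cover the worst side information), and the bin index must be channel-decodable by \emph{every} $k\in\Jc$ (so it is limited by the worst channel term). Eliminating $(\hat{R}_{\Jc},R'_{\Jc})$ therefore yields
$\max_{k\in\Jc} I(\hat{Y}_{\Jc};V_{\Jc},\{V_{k'}\}_{k'\in\Kc},\{V_{\Ic}\}_{\Ic\subset\Jc}\cond Y_k,S,Q,\hat{S}) \le \min_{k\in\Jc} I(V_{\Jc};Y_k,\{\hat{Y}_{\Lc}\}_{\Lc\supset\Jc}\cond V_k,\{V_{\Ic}\}_{\Ic\subset\Jc,k\in\Ic},S,Q,\hat{S})$,
not \eqref{eq:general_region2}, which only asks that for each pair $(k,\Jc)$ the \emph{same} receiver's source term not exceed that \emph{same} receiver's channel term. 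The two coincide only when side-information quality and channel quality are aligned across the receivers in $\Jc$; in the asymmetric cases this paper targets (a receiver with good $Y_k$-side information but a weak channel coexisting with one in the opposite situation), your separation-based region is strictly smaller, so the claim that Fourier--Motzkin ``yields exactly \eqref{eq:general_region2}'' does not hold. A secondary slip: what you call the covering-lemma condition, with conditioning on $Y_k$, is the post-binning Wyner--Ziv rate; the covering step itself gives \eqref{eq:RJ_LB}, without $Y_k$.

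The paper closes this gap by dispensing with binning altogether and using Tuncel-style joint source--channel coding \cite{tuncel2006slepian,he2018KBC}: the covering index $m_{\Jc,b}$ chosen at rate \eqref{eq:RJ_LB} directly indexes the channel codeword $\vv_{\Jc}(m_{\Jc,b})$ of the next block (which carries no fresh message), and receiver $k$ decodes it by \emph{simultaneous} joint typicality over two consecutive blocks, pairing $\hat{\pmb{y}}_{\Jc}(m_{\Jc,b})$ with its own $\pmb{y}_{k,b}$ and $\pmb{v}_{\Jc}(m_{\Jc,b})$ with $\pmb{y}_{k,b+1}$. This produces the receiver-dependent sum constraint \eqref{eq:RJ_UB}, whose first term $I(\hat{Y}_{\Jc};Y_k\cond S,Q,\hat{S})$ lets a receiver trade channel quality against side-information quality individually; eliminating $R_{\Jc}$ between \eqref{eq:RJ_LB} and \eqref{eq:RJ_UB} (using the Markov structure of \eqref{eq:pmf} to merge the two mutual-information differences) gives exactly the per-$(k,\Jc)$ condition \eqref{eq:general_region2}, with \eqref{eq:general_region1} coming from the private-message packing step as in your proposal. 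If you want to keep your architecture, you would have to replace the explicit bins by this virtual-binning/joint-decoding step; the rest of your block-Markov structure, backward decoding order over nested subsets, and the $B\to\infty$ argument are consistent with the paper (the paper uses $B+K-1$ blocks rather than $B+1$, which is immaterial).
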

	\begin{table}[!t] 
		\centering
		\caption{Side information to be resent in the future.}
		\label{tab:SI}
		\begin{tabular}{r|c|c|c}
			& $V_{1}$  & $V_{2}$  & $V_{1} \otimes V_{2}$  \\\hline\hline
			$(S_{1},S_{2}) = (0,0)$ & $\emptyset$ & $\emptyset$ & $\emptyset$\\\hline
			$(S_{1},S_{2}) = (1,0)$ & $\emptyset$ & $V_2$ & $V_2$\\ \hline
			$(S_{1},S_{2}) = (0,1)$ & $V_1$ & $\emptyset$ & $V_1$\\ \hline
			$(S_{1},S_{2}) = (1,1)$ & $\emptyset$ & $\emptyset$ & $V_1$ or $V_2$\\
			\hline
		\end{tabular}
	\end{table}
	
	\noindent where ${\left\{{{V_\Ic}} \right\}_{\Ic \subset \Jc}}\!:=\!\left\{ {{V_\Ic}:\!\Ic\!\!\subset\!\!\Jc} \right\}$, ${\{\hat{Y}_{{\Lc}}\}_{\Lc \supset\Jc}}\!\!:=\!\left\{ {\hat{Y}_{{\Lc}}:\!\Lc \!\!\supset \!\!\Jc} \right\}$. $Q$ is the time-sharing Random Variable (RV), $V_k$ is the private signal for receiver $k$, $\hat{Y}_{\Jc}$ and $V_{\Jc}$ are respectively the side information intended for receivers in set $\Jc$ and the signal that carries such information. $\hat{Y}_{\Jc}$ encloses the related signals of the previous block, namely, all the private signals $\{V_k\}_{ k\in\Jc}$ and all the common signals $\{V_\Ic\}_{ \Ic\subset\Jc}$ intended for receiver sets which are strict subsets of $\Jc$. 
	
	Now, we describe formally the proposed scheme. Our scheme integrates the block-Markov scheme and the joint source-channel
	coding. In particular, the transmission consists of $B+K-1$ blocks each of length $n$. The message $m_k$ intended for receiver $k$, $k\in\Kc$, is divided into $B$ sub-messages,
	i.e., $m_{k,b}\in\Mc_k$, each one transmitted in block $b$, $b\in[1:B]$.
	\subsubsection*{\underline{Codebook generation}}
	Fix the pmf as described in \eqref{eq:pmf}.
	\begin{enumerate}
		\item Before each block $b$, randomly generate the time-sharing sequence $\qv_b$ according to $\prod\nolimits_{i= 1}^{{n}} {p(q_i\cond {{\hat s}_i})}$.
		\item At the beginning of each block, generate $2^{nR_k}$ sequences $\pmb{v}_k (m_k)$, $m_k \in
		[1:2^{nR_k}]$, randomly and independently for each receiver $k\in\Kc$, according to $\prod_{i =1}^{{n}} {p(v_{k,i}\cond {{\hat s}_i})}$.
		\item At the beginning of each block, for each $\Jc\subseteq\Kc$, randomly
		generate $2^{nR_{{\Jc}}}$ independent sequences $\pmb{v}_{{\Jc} } (m_{{\Jc}})$
		according to $\prod_{i=1}^n p(v_{{{\Jc}},i}\cond {{\hat s}_i})$ with $m_{{\Jc} } \in [1:2^{nR_{ {\Jc} }}]$.
		\item  At the end of each block, upon the reception of the state feedback $\sv$, randomly and
		independently generate $2^{nR_{{\Jc}}}$ sequences $\hat{\pmb{y}}_{{\Jc}} (m_{{{\Jc}}})$ for each $\Jc\subseteq \Kc$,
		according to $\prod_{i=1}^n p(\hat{y}_{{{\Jc} },i}|s_i,q_i)$.
	\end{enumerate}
	\subsubsection*{\underline{Encoding}}
	\begin{itemize}
		\item Set $m_{{\Jc},b} = 1$, $\forall \Jc \subseteq \Kc$, $b \in \{0,
		[B+|\Jc|-1:B+K-2]\}$. Set $m_{k,b}=1, \forall k\in\Kc$, $b \in
		[B+1:B+K-1]$.
		\item For each $\Jc\subseteq\Kc$, at the end of each block $b$, $b\in[1:B+|\Jc|-2]$, given the state feedback $\pmb{s}_{b}$, $\{m_{ {\Ic} ,b-1}\}_{\Ic \subseteq \Kc}$ and $\{m_{k,b}\}_{k \in \Kc}$, the encoder looks for a unique message index $m_{ {\Jc} ,b}$ such that $(\hfill\hat{\pmb{y}}_{{\Jc}}(m_{{\Jc},b}),\hfill \{\vv_{{\Ic}}(m_{{\Ic},b-1})\}_{\Ic \subseteq \Jc},\hfill\\\{\vv_k(m_{k,b})\}_{k\in \Kc}, \sv_b, \qv_b, \hat{\sv}_b)$ are jointly typical. If there is more than one index, it selects one of them uniformly at random, otherwise an error is
		declared. According to the covering lemma \cite{el2011network}, such an index can be found with high probability if
		\begin{align}\label{eq:RJ_LB}
			\!\!nR_{{\Jc}}\!\geq\! nI(\hat{Y}_{{\Jc}};\!\{V_k\}_{k \in \Kc},\! \{V_{{\Ic} }\}_{\Ic \subseteq \Jc}|S, Q, \hat{S})\!+\!n \epsilon_n.
		\end{align}
		
		\item In block $b\in[1:B+K-1]$, the transmitter generates a sequence $\pmb{x}$ from
		$(\{\pmb{v}_{{\Jc}}(m_{{\Jc},b-1})\}_{\Jc\subseteq\Kc},\{\pmb{v}_{k}(m_{k,b})\}_{k\in \Kc})$ according to $\prod_{i=1}^n
		p(x_i\cond \{v_{{\Jc},i}\}_{\Jc\subseteq\Kc}, \{v_{k,i}\}_{k\in \Kc}, q_i)$.
	\end{itemize}
	\subsubsection*{\underline{Decoding}}
	\begin{itemize}
		\item[-] At the end of the transmission, a backward decoding is performed at each receiver. For blocks $b\in[1:B+|\Jc|-2]$, for each $\Jc\subseteq \Kc, k\in\Jc$, with the
		knowledge of $\hat{m}_{k,b+1}$, $\{\hat{m}_{{\Ic} ,b}\}_{\Ic\subset\Jc}$, $\{\hat{m}_{{\Lc} ,b+1}\}_{\Lc \supset\Jc}$ and the state information, the receiver~$k$ finds a unique index
		$\hat{m}_{{\Jc},b}$, such that $(\hat{\pmb{y}}_{{\Jc}}(\hat{m}_{{\Jc},b}), \pmb{y}_{k,b}, \pmb{s}_{b}, \pmb{q}_{b},\hat{\pmb{s}}_{b})$  and $(\pmb{v}_{{\Jc}}(\hat{m}_{{\Jc},b}),\hfill\{\pmb{v}_{{\Ic} }(\hat{m}_{{\Ic},b})\}_{\Ic\subset \Jc, k\in\Ic},\hfill \{\hat{\pmb{y}}_{{\Lc} }(\hat{m}_{{\Lc},b+1})\}_{\Lc \supset\Jc},\\
		\pmb{v}_k(\hat{m}_{k,b+1}),
		\pmb{y}_{k,b+1},\pmb{s}_{b+1},\pmb{q}_{b+1},\hat{\pmb{s}}_{b+1})$ are simultaneously
		jointly typical. According to Tuncel's coding\cite{tuncel2006slepian,
			he2018KBC}, we have $\hat{m}_{{\Jc},b}=m_{{\Jc},b}$ with high probability if
		\begin{align}\label{eq:RJ_UB}
			&nR_{{\Jc}} \leq n I(\hat{Y}_{{\Jc}}; Y_k,|S,Q,\hat{S})+\\
			&nI(V_{ {\Jc}};Y_k, \{\hat{Y}_{{\Lc}}\}_{\Lc \supset\Jc}|V_k, \{V_{{\Ic}}\}_{\Ic\subset \Jc, k\in\Ic}, S, \hat{S}, Q)\!\!-\!n\epsilon_n'.\nonumber
		\end{align}
		
		\item[-] Given that $\{\hat{m}_{{\Jc},b}\}_{k\in\Jc, \Jc\subseteq \Kc}$ are available for $b\in[1:B]$, the receiver $k$ looks for a unique message index
		$\hat{m}_{k,b}$ such that $(\pmb{v}_k(\hat{m}_{k,b}), \hfill\pmb{y}_{k,b}, \hfill\{\hat{\pmb{y}}_{{\Jc} }(\hat{m}_{{\Jc},b})\}_{k\in\Jc,\Jc\subseteq \Kc},\hfill \pmb{s}_b,\hfill\pmb{q}_b,\\\hat{\pmb{s}}_b)$ are jointly typical. 
		We have $\hat{m}_{k,b}=m_{k,b}$ with high probability provided that
		\begin{align}\label{eq:Rk_UB}
			\!\!\!\!n R_k \leq n I(V_k;Y_k,\{\hat{Y}_{{\Jc} }\}_{k\in\Jc,\Jc\subseteq \Kc}\cond S,Q,\hat{S}) - n \epsilon_n''.
		\end{align}
	\end{itemize}
	
	From \eqref{eq:RJ_LB} to \eqref{eq:Rk_UB}, letting $n,B\to\infty$, and
	apply the Fourier-Motzkin elimination to all constraints, we obtain the
	rate region for the general case given in Theorem~\ref{thm:general_region}. To derive a tractable rate region, we apply  
	the following choices on the RVs. 
	\begin{itemize}
		\item The time-sharing RV $Q$ takes two kinds of values, namely, $Q \in
		\Qc:=\Qc_{\text{in}}\cup \Qc_{\text{mix}}$, where 
		$\Qc_{\text{mix}}:=\{i\otimes j,\;\forall i,j\in\Kc,i\ne j\}$ and $\Qc_{\text{in}}:=\{\Uc, \forall \;\Uc\subseteq\Kc,1\le|\Uc|\le K\}$. Further, $Q$ depends on the estimated state $\hat{S}$ as 
		\begin{align}\label{eq:def_Q_hats}
			\P(Q = q \cond \hat{S} = \hat{s}) = \alpha_{q,\hat{s}}, \quad q\in\Qc,
			\hat{s} \in {\Sc}, 
		\end{align}
		with $\alpha_{q,\hat{s}}\ge0$ and $\sum_{q\in\Qc} \alpha_{q,\hat{s}}=1$ for any $\hat{s}\in{\Sc}$.
		\item 
		When $Q = \Uc\in\Qc_{\text{in}}$, an individual (private or common) signal
		intended to the receiver group $\Uc$ is transmitted, i.e., $X = V_{\Uc}$.  
		The side information $\hat Y$ is a deterministic function of $(V,S,\Uc)$, that is
		\begin{itemize}
			\item if $S_{\Uc}=\onev$, then ${\hat{Y}_{\Jc}}=\emptyset,\;\;\forall \Jc\subseteq\Kc$;
			\item if $S_{\Uc}\neq\onev$, and there exits a set  $\Jc\supset \Uc$ such that $S_{\Jc\backslash \Uc}=\onev,S_{\Kc\backslash \Jc}=\zerov$ (defined as  $\Sc_c(\Uc,\Jc)$), then
			\begin{equation}
			{\hat{Y}_{\tilde{\Jc}}}=X,  \text{w.p. } \beta_{\Uc,\tilde{\Jc}}^S,\quad\quad\forall  \tilde{\Jc}\in\Jc_c(\Uc,\Jc),
			\end{equation}
			where $\beta_{\Uc,\tilde{\Jc}}^S\ge0$, $\Jc_c(\Uc,{\Jc})$ is the set of $\tilde{\Jc}$ such that $\Uc\subset\tilde{\Jc} \subset \Jc, |\tilde{\Jc}|=|\Jc|-1 \text{ or } \tilde{\Jc}=\Jc$.
		\end{itemize}
		\item When $Q = (i\otimes j)\in\Qc_{\text{mix}}$, 
		we send a mixture of private signals, $X = V_i\otimes V_j$, the side information is set as
		\begin{equation*}\label{eq:def_side_info2}
			{\hat{Y}_{\{i,j\}}} = \begin{cases}
				V_i, &\text{if } S_{\Kc\backslash\{i,j\}}=\zerov,S_j=1,\\
				V_j, &\text{if } S_{\Kc\backslash\{i,j\}}=\zerov,S_{\{i,j\}}=10,\\
				0, & \text{otherwise.}
			\end{cases}
		\end{equation*}
		\item Uniformly distributed $V$'s in $\mathcal{X}$, namely, $\forall x\in\Xc$,
		\begin{align}
			\!\!\!\!\P[V_i=x] =\P[V_{{\Jc}}=x]={1 \mathord{\left/
					{\vphantom {1 |\mathcal{X}|}} \right.
					\kern-\nulldelimiterspace} |\mathcal{X}|},\,i\in\Kc,\Jc\subseteq\Kc.
		\end{align}
	\end{itemize}
	With the above setting, we obtain the result as follows.
	\begin{proposition}\label{the:K_region}
		A normalized rate tuple $({R_1},\ldots,{R_K})$ for the $K$-receiver EBC is achievable if 
		\begin{align}
			\!\!\!R_k\!\le\!\biggl({{\!\alphav _{k}^\T}{{ 
						\,\pv}_{{{s_{\Kc}\ne\zerov}}}}}\!+\!\!\!\!\!\!\!
			\sum\limits_{j\ne k,
				{j\in\Kc}}{{\!\!\!\!\!\alphav _{k \otimes
						j}^\T}\,{\pv_{{s_{\Kc\backslash \{ k,j\} }}
						=\zerov,{s_{\{k,j\}}} \ne\zerov}}}\!\biggr), 
		\end{align}
		$\forall k\in\Kc$, with the following constraints:
		\begin{align}
			&\mathop {\max}_{{k,\Jc:\hfill\atop k\in\Jc}\atop{\Jc\subseteq\Kc}}\biggl\{\sum\limits_{{ \Lc\supseteq\Jc}\atop{|\Lc|-|\Jc|=0,1}}\!\sum\limits_{s\in\Sc_c(k,\Lc)}\alphav_k^\T\pv_s\beta_{k,\Jc}^s+\nonumber\\
			&\sum\limits_{\Ic\subseteq\Jc\atop k\in\Ic}\sum\limits_{{ \Lc\supseteq\Jc}\atop{|\Lc|-|\Jc|=0,1}}\!\!\sum\limits_{s\in\Sc_c(\Ic,\Lc)}\alphav_{\Ic}^\T\pv_s\beta_{\Ic,\Jc}^s-\alphav_\Jc^\T \pv_{s_k=1}\nonumber\\
			&-\sum\limits_{\Lc\supset\Jc}\sum\limits_{{ \Uc\supseteq\Lc}\atop{|\Uc|-|\Lc|=0,1}}\sum\limits_{s\in\Sc_c(\Jc,\Uc)}\alphav_{\Jc}^\T\pv_s\beta_{\Jc,\Lc}^s\nonumber\\
			&+\sum\limits_{j \ne k,j\in\Kc}
			\!\!\!\!{{\alphav _{k \otimes
						j}^\T}\,{\pv_{{s_{\Kc\backslash \{ k,j\} }} =\zerov,{s_{\{k,j\}}}
						\ne\zerov}}}\mathbbm{1}_{\{\Jc=\{k,j\}\}}\biggr\}\le 0,	\end{align}
		and $\forall\;\Uc\subset \Jc, s\in\Sc_c(\Uc,\Jc)\;$, 
		\begin{align}
			\begin{cases}
				\beta_{\Uc,{\Jc}}^s\in [0,1],&\text{if } \Jc=\Kc, |\Uc|=|\Jc|-1,\\
				\sum\limits_{\tilde{\Jc}\in\Jc_c(\Uc,{\Jc})}\!\!\!\!\!\!\!\beta_{\Uc,\tilde{\Jc}}^s=1, &\text{otherwise,}\label{eq:constraint_beta}
			\end{cases}
		\end{align}
		
		\vspace{-0.1cm}
		\noindent where $\alphav_{q}:=[\alpha_{q,\hat{s}} : \hat{s}\in{\mathcal{S}}]\in[0,1]^{2^K}, q\in\Qc$ with $\sum\limits_{q\in \Qc} {\alphav_{q}}=\onev\label{eq:constraint_alpha}$.
	\end{proposition}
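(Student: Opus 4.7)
The plan is to specialize Theorem \ref{thm:general_region} by plugging in the prescribed choices of $Q$, the $V$'s, and the side information $\hat{Y}_\Jc$, and then evaluate each information-theoretic quantity explicitly. First I would verify that the prescribed joint distribution factorizes according to \eqref{eq:pmf}: since $Q$ depends only on $\hat{S}$, each $V_k$ and $V_\Jc$ is uniform on $\Xc$ and independent of everything else, and the channel output is a deterministic function of $(X,S)$ while $\hat{Y}_\Jc$ is a deterministic function of $(X,S,Q)$ (with the allowed randomization captured by the coefficients $\beta^s_{\Uc,\tilde\Jc}$), the required factorization is immediate, and the normalization conditions on $\alpha$ and $\beta$ ensure valid conditional pmfs.

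Next I would compute the single-letter expression for $R_k$ in \eqref{eq:general_region1}. Since $V_k$ is uniform on $\Xc$, one has $H(V_k\cond S,Q,\hat{S})=\log|\Xc|$, and the conditional entropy $H(V_k\cond Y_k,\{\hat{Y}_\Jc\}_{k\in\Jc},S,Q,\hat{S})$ vanishes exactly on those $(Q,S)$ events for which $V_k$ is identifiable either from $Y_k$ or from some covering side-information sequence. Partitioning on $Q$: when $Q=\{k\}$, $V_k$ is recoverable whenever the state is not all-erasure (either $S_k=1$ gives direct reception, or some other receiver sees $X$ and forwards it later as $\hat{Y}_\Jc$ with $\Jc\ni k$), which contributes $\alphav_k^\T\pv_{s_\Kc\neq\zerov}$; when $Q=k\otimes j$, the prescribed side-information rule extracts $V_k$ exactly on the events $\{S_{\Kc\backslash\{k,j\}}=\zerov,\,S_{\{k,j\}}\neq\zerov\}$, yielding $\alphav_{k\otimes j}^\T\pv_{\cdot}$. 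Summing over $j\neq k$ and normalizing by $\log|\Xc|$ gives the stated rate bound.

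The harder step is constraint \eqref{eq:general_region2}, which takes the form of a maximum over $(k,\Jc)$ of $I(\hat{Y}_\Jc;\cdot\cond Y_k,\ldots)-I(V_\Jc;Y_k,\{\hat{Y}_\Lc\}_{\Lc\supset\Jc}\cond V_k,\ldots)$. The first (compression) mutual information evaluates to $\log|\Xc|$ times the total probability mass on the events where $\hat{Y}_\Jc=X$, split according to whether the source $Q=\Uc$ lies in $\Qc_{\text{in}}$ or $\Qc_{\text{mix}}$; crucially, for receiver $k$ this contribution is cancelled whenever $k\in\Uc$ and $S_k=1$, because in that case receiver $k$ already holds $X=\hat{Y}_\Jc$ and the chunk is free to compress. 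This is what produces the nested sums over $\Ic\subseteq\Jc$ with $k\in\Ic$, the restriction $|\Lc|-|\Jc|\in\{0,1\}$, and the state sets $\Sc_c(\Uc,\Lc)$. The negative term captures the rate at which receiver $k$ directly decodes $V_\Jc$ (when $Q=\Jc$ and $S_k=1$) plus the rate made available by upstream blocks via $\hat{Y}_\Lc$ for $\Lc\supset\Jc$; this yields the $-\alphav_\Jc^\T\pv_{s_k=1}$ and $-\sum_{\Lc\supset\Jc}\sum_{\Uc\supseteq\Lc}\cdots$ terms. Finally, the $\beta$-constraints in \eqref{eq:constraint_beta} come from requiring that the randomized side-information assignment is a genuine conditional pmf over $\tilde\Jc\in\Jc_c(\Uc,\Jc)$, the only exception being $\Jc=\Kc$ with $|\Uc|=|\Jc|-1$, where no proper superset exists to forward the signal further and the coefficient may lie anywhere in $[0,1]$.

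The main obstacle I anticipate is the careful bookkeeping of which $(Q,S)$ events produce decodable (respectively free-to-compress) side information for which receivers. In particular, the non-trivial observation that a side-information packet already held by some receiver contributes zero to that receiver's compression rate is what makes the expressions involve subset sums rather than simpler marginals; keeping the indicator $\mathbbm{1}_{\{\Jc=\{k,j\}\}}$ and the $\Sc_c(\Uc,\Lc),\Jc_c(\Uc,\Jc)$ indexing consistent across the compression and decoding sides will be the most error-prone part of the derivation. Once all contributions are collected and normalized by $\log|\Xc|$, the two general constraints of Theorem \ref{thm:general_region} reduce exactly to the inequalities stated in Proposition \ref{the:K_region}.
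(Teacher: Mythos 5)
Your proposal takes essentially the same route as the paper: Proposition~\ref{the:K_region} is obtained there precisely by specializing Theorem~\ref{thm:general_region} to the prescribed choices of $Q$, the uniform $V$'s, the side-information rules with the $\alpha$'s and $\beta$'s, and then evaluating the mutual-information terms (normalized by $\log|\Xc|$), with the $\beta$-constraints coming from the conditional-pmf requirement — exactly the steps you outline. The paper in fact gives even less detail on the bookkeeping of the $(Q,S)$ events than you do, so your plan is consistent with, and no less complete than, the paper's own argument.
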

	
	The intuition behind the setting of the side information is the
	following. When we transmit an individual signal, i.e., $Q= \Uc$, the
	signal intended for receiver set $\Uc$ is sent. If this signal is not received by
	some of them (${S_{{\Uc}}} \ne \bm{1}$), and meanwhile is overheard
	by some unintended receivers (denoted by $\Jc\backslash {\Uc}$), then this
	signal becomes a side information for the receiver group $\Jc$ and will be
	compressed in $\hat{Y}_{\Jc}$ and transmitted in the future. In 
	\cite{he2018KBC}, the set $\Jc$ depends on the channel state in a
	deterministic way such that ${S_{{\Kc\backslash\Jc}}} =\bm{0}$. 
	For example, in the three-receiver case, assume
	$\Uc=\{1\}$, then ${\hat{Y}_{{\{1,2,3\}}}}=V_1$ as long as $S=011$. But such a setting may be suboptimal when
	the channel is asymmetric. In fact, even when $S=011$, it may be more efficient to
	retransmit $V_1$ in ${\hat{Y}_{{\{1,2\}}}}$ than in ${\hat{Y}_{{\{1,2,3\}}}}$
	depending on the channel statistics. One crucial idea in our current work is to enable
	such a \emph{downgrading} by introducing the conditional probabilities $\beta_{\Uc,\Jc}^S$ and $\beta_{\Uc,\tilde{\Jc}}^S$. Specifically, given $\Uc$ and the state, these $\beta$'s control how much side information can be compressed in ${\hat{Y}_{\Jc}}$ and in its lower layer ${\hat{Y}_{\tilde{\Jc}}}$ respectively.

	\section{Conclusion}\label{sec:conclusion}
	Through an information-theoretic study, we have demonstrated the potential benefit of vehicle location information to downlink rate improvement in a mmW V2I network. An interesting future direction is to investigate practical coding schemes for such networks.

\section{Appendix}
\subsection{Achievability of Theorem \ref{the:two_region}}
By setting $K=2$ in Proposition~\ref{the:K_region}, we obtain the normalized achievable rate region of two-receiver EBC as follows.
\begin{proposition}\label{pro_twoEBC}
	Let us define
	\begin{align}
		\Rc&_{\text{2EBC}}(\alphav_1, \alphav_2,\alphav_{1\otimes2}):=\nonumber\\
		&\left\{(R_1, R_2):
		\begin{array}{l}
			R_1  \leq (\alphav_1+\alphav_{1\otimes2})^\T \pv_{\overbar{00}}\\
			R_2  \leq (\alphav_2+\alphav_{1\otimes2})^\T \pv_{\overbar{00}}  
			%     {R_1} \le \betav^\T (\alphav\circ\pv_{\bar{00}}) \\
			%      {R_2} \le \betav^\T \bigl( (\onev-\alphav) \circ\pv_{\bar{00}}\bigr)
		\end{array}
		\right\}.
		\label{eq:def_achivable_region1}
	\end{align}%
	Then the achievable region of the two-receiver EBC is the convex hull of the union of
	$\Rc_{\text{2EBC}}(\alphav_1, \alphav_2,\alphav_{1\otimes2})$ over all
	$\alphav_1, \alphav_2, \alphav_{1\otimes2}\in [0,1]^4$ such that
	$\alphav_1+\alphav_2+\alphav_{1\otimes2}\preceq \onev$ and
	\begin{align}
		(\alphav_2+\alphav_{1\otimes2})^\T\pv_{10,11} + (\alphav_1+\alphav_{1\otimes2})^\T\pv_{\overbar{00}} &\leq p_{10,11},
		\label{eq:def_achivable_region20}
		\\
		(\alphav_1+\alphav_{1\otimes2})^\T\pv_{01,11} + (\alphav_2+\alphav_{1\otimes2})^\T\pv_{\overbar{00}} &\leq p_{01,11}.
		% \betav^\T(\pv_{\bar{00}} - (\onev-\alphav)\circ \pv_{10}) &\le p_{01}  + p_{11},\\
		% \betav^\T(\pv_{\bar{00}} - \alphav\circ \pv_{01}) &\le p_{10} + p_{11}.
		\label{eq:def_achivable_region2}
	\end{align}%
\end{proposition}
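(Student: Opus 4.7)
The plan is to specialize Proposition~\ref{the:K_region} to $K=2$ and algebraically reduce the resulting inequalities to the form in Proposition~\ref{pro_twoEBC}. For $\Kc=\{1,2\}$ the time-sharing variable $Q$ takes only four values, namely $\{1\},\{2\},\{1,2\}\in\Qc_{\text{in}}$ and $1\otimes 2\in\Qc_{\text{mix}}$, with allocation vectors $\alphav_1,\alphav_2,\alphav_{\{1,2\}},\alphav_{1\otimes 2}\in[0,1]^{4}$ summing to $\onev$. Since the only subset $\Jc\subseteq\Kc$ with $|\Jc|\ge 2$ is $\Jc=\Kc$ itself, there is a single side-information signal $\hat{Y}_{\{1,2\}}$ carried by $V_{\{1,2\}}$; all sums in the general formula indexed by $\Lc\supset\Jc$ or by $\Ic\subsetneq\Jc$ with $|\Ic|\ge 2$ are consequently empty.

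I would next evaluate the per-user rate bound. For $k=1$ the bound collapses to $R_1\le(\alphav_1+\alphav_{1\otimes 2})^\T\pv_{\overbar{00}}$, because the only $j\ne k$ is $j=2$ and $\Kc\setminus\{1,2\}=\emptyset$, so the event $\{s_{\{1,2\}}\ne\zerov\}$ is exactly $\{s\ne 00\}$. This yields \eqref{eq:def_achivable_region1}, and the bound on $R_2$ follows by symmetry.

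Then I would unpack the only non-trivial side-information constraint, taken at $(k,\Jc)=(1,\{1,2\})$. After pruning the empty sums it reduces to a generation term $\alphav_1^\T\pv_{01}\,\beta_{\{1\},\{1,2\}}^{01}$ (receiver~2 overhears $V_1$ when $s=01$), a capacity term $-\alphav_{\{1,2\}}^\T\pv_{s_1=1}=-\alphav_{\{1,2\}}^\T\pv_{10,11}$ representing the rate of $V_{\{1,2\}}$ decodable at receiver~1, and a mixed contribution $\alphav_{1\otimes 2}^\T\pv_{\overbar{00}}$ from the $V_1\otimes V_2$ phase. Because $\Jc=\Kc$ and $|\Uc|=1=|\Jc|-1$, the parameter $\beta_{\{1\},\{1,2\}}^{01}\in[0,1]$ is free, and I would set it to~$1$ since no lower layer $\tilde\Jc\subsetneq\Jc$ is available to downgrade to. Writing $\pv_{\overbar{00}}=\pv_{01}+\pv_{10,11}$ and eliminating $\alphav_{\{1,2\}}=\onev-\alphav_1-\alphav_2-\alphav_{1\otimes 2}$ by the simplex identity, the inequality rearranges into
\begin{align}
(\alphav_1+\alphav_{1\otimes 2})^\T\pv_{\overbar{00}}+(\alphav_2+\alphav_{1\otimes 2})^\T\pv_{10,11}\le p_{10,11},\nonumber
\end{align}
which is \eqref{eq:def_achivable_region20}. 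The symmetric case $(k,\Jc)=(2,\{1,2\})$ yields \eqref{eq:def_achivable_region2}, and a convex hull over admissible triples $(\alphav_1,\alphav_2,\alphav_{1\otimes 2})$ with $\alphav_1+\alphav_2+\alphav_{1\otimes 2}\preceq\onev$ produces the claimed region.

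The main obstacle is the bookkeeping in the third step: one needs to verify that each sum in the general formula does collapse as claimed and that the generation, capacity and mixed terms identify exactly with $\pv_{01}$, $\pv_{10,11}$ and $\pv_{\overbar{00}}$. Once this correspondence is established the rearrangement is a short computation, and the choice $\beta=1$ is without loss because the only freedom it encodes (downgrading $\hat Y_{\{1,2\}}$ to a strict subset of $\{1,2\}$) is unavailable when $\Jc$ is already $\Kc$.
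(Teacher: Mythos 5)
Your proposal is correct and takes essentially the same route as the paper, which obtains Proposition~\ref{pro_twoEBC} simply by setting $K=2$ in Proposition~\ref{the:K_region} (with $\alphav_{\{1,2\}}=\onev-\alphav_1-\alphav_2-\alphav_{1\otimes2}$ absorbed into the condition $\alphav_1+\alphav_2+\alphav_{1\otimes2}\preceq\onev$); your bookkeeping — the collapse of the sums for $\Jc=\Kc$, the reduced constraint $\alphav_1^\T\pv_{01}\beta_{\{1\},\{1,2\}}^{01}-\alphav_{\{1,2\}}^\T\pv_{10,11}+\alphav_{1\otimes2}^\T\pv_{\overbar{00}}\le 0$, and the rearrangement using $\pv_{\overbar{00}}=\pv_{01}+\pv_{10,11}$ — is exactly the specialization the paper leaves implicit. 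The only small caveat is that $\beta_{\{1\},\{1,2\}}^{01}=1$ is a permitted choice under \eqref{eq:constraint_beta} (there it is only constrained to lie in $[0,1]$) rather than being forced by the absence of a lower layer, but it is a legitimate parameter choice for achievability and is precisely the one that produces \eqref{eq:def_achivable_region20}--\eqref{eq:def_achivable_region2}.
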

Now we need to establish the equivalence between the above achievable region and the region in Theorem \ref{the:two_region}. In the two-receiver case, the outer bound in Theorem \ref{the:two_region} can be reformed as \cite{li2017capacity}: 
	\begin{align}
		&\Rc_{\text{2EBC}}^o(\betav_1, \betav_2):=\nonumber\\
		&\left\{
		\begin{array}{l}
			R_1\le \text{min}\left\{\betav_1^\T \pv_{\overbar{00}},(\onev- \betav_2)^\T\pv_{10,11}\right\}\\
			R_2\le \text{min}\left\{\betav_2^\T \pv_{\overbar{00}},(\onev- \betav_1)^\T\pv_{01,11}\right\}
		\end{array}\right\}.\label{eq:outer_2EBC}
	\end{align}%
	for $\zerov\preceq\betav_1,\betav_2\preceq \onev$. In the outer bound \eqref{eq:outer_2EBC}, both $R_1$ and $R_2$ take minimum value in two possible candidates, let us consider the following 2 cases out of the 4 possible combinations.
	
	First, when
	$\left\{
	\begin{array}{l}
	\betav_1^\T \pv_{\overbar{00}}>(\onev- \betav_2)^\T\pv_{10,11}\\
	\betav_2^\T \pv_{\overbar{00}}\le(\onev-
	\betav_1)^\T\pv_{01,11}\end{array}\right.$, the outer bound
	is written as
	\begin{align}
		\Rc_{\text{2EBC}}^o(\betav_1, \betav_2)= \left\{
		\begin{array}{l}
			R_1  \le (\onev- \betav_2)^\T\pv_{10,11}\\
			R_2  \le \betav_2^\T \pv_{\overbar{00}}\label{eq:theorem3_case1}
		\end{array}\right\}.
	\end{align}%
	Since $\betav_1^\T \pv_{\overbar{00}}>(\onev- \betav_2)^\T\pv_{10,11}$ and $\zerov^\T \pv_{\overbar{00}}\le(\onev- \betav_2)^\T\pv_{10,11}$, there exists a $\zerov\prec\bm{\eta}\preceq \betav_1$ such that $\betav_1^*=\betav_1-\bm{\eta}$ and $\left\{\begin{array}{l}
	\betav_1^{*\T} \pv_{\overbar{00}}=(\onev- \betav_2)^\T\pv_{10,11}\\
	\betav_2^\T \pv_{\overbar{00}}<(\onev- \betav_1^*)^\T\pv_{01,11}
	\end{array}\right.$. With
	$\betav_1^*$ and $\betav_2$, we have
	\begin{align}
		\Rc_{\text{2EBC}}^o(\betav_1^*, \betav_2)=\left\{
		\begin{array}{l}
			R_1\le \betav_1^{*\T} \pv_{\overbar{00}}\\
			R_2\le \betav_2^\T \pv_{\overbar{00}}\label{eq:theorem3_case1_R*}
		\end{array}\right\},
	\end{align}%
	which is equivalent to the original outer bound. 
	
	Next, when 
	$\left\{
	\begin{array}{l}
	\betav_1^\T \pv_{\overbar{00}}>(\onev- \betav_2)^\T\pv_{10,11}\\
	\betav_2^\T \pv_{\overbar{00}}>(\onev- \betav_1)^\T\pv_{01,11}
	\end{array}\right.
	$, we have
	\begin{align}
		\Rc_{\text{2EBC}}^o(\betav_1, \betav_2)=\left\{
		\begin{array}{l}
			R_1  \le  (\onev- \betav_2)^\T\pv_{10,11}\\
			R_2  \le (\onev- \betav_1)^\T\pv_{01,11}
		\end{array}\right\}.\label{eq:theorem3_case3}
	\end{align}%
	Note that there exists a $\zerov\prec\bm{\eta_1}\preceq \betav_1$ such that $\betav_1^{*\T} \pv_{\overbar{00}}=(\onev- \betav_2)^\T\pv_{10,11},\label{eq:theorem3_case3_beta1}$ where $\betav_1^*=\betav_1-\bm{\eta_1}$.
	If $\betav_2^\T \pv_{\overbar{00}}\le(\onev- \betav_1^*)^\T\pv_{01,11}$, then the new outer bound can be written as
	\begin{align}\label{eq:theorem3_case3_R*1}
		\Rc_{\text{2EBC}}^o(\betav_1^*, \betav_2)=\left\{
		\begin{array}{l}
			R_1  \le \betav_1^{*\T} \pv_{\overbar{00}}\\
			R_2  \le\betav_2^\T \pv_{\overbar{00}}
		\end{array}\right\},
	\end{align}%
	which contains the original outer bound $\Rc_{\text{2EBC}}^o(\betav_1, \betav_2)$.
	If $\betav_2^\T \pv_{\overbar{00}}>(\onev- \betav_1^*)^\T\pv_{01,11}$, then there exists a $\zerov\prec\bm{\eta_2}\preceq \betav_2$ such that $\betav_2^{*\T} \pv_{\overbar{00}}=(\onev- \betav_1^*)^\T\pv_{01,11}$, where $\betav_2^*=\betav_2-\bm{\eta_2}$.
	Now one can verify that $\left\{
	\begin{array}{l}
	\betav_1^{*\T} \pv_{\overbar{00}}<(\onev- \betav_2^*)^\T\pv_{10,11}\\
	\betav_2^{*\T} \pv_{\overbar{00}}=(\onev- \betav_1^*)^\T\pv_{01,11}
	\end{array}\right.$, as such, the new outer bound becomes
	\begin{align}
		\Rc_{\text{2EBC}}^o(\betav_1^*, \betav_2^*)=\left\{
		\begin{array}{l}
			R_1  \le \betav_1^{*\T} \pv_{\overbar{00}}	\\
			R_2  \le\betav_2^{*\T} \pv_{\overbar{00}}
		\end{array}\right\},\label{theorem3_case_R*2}
	\end{align}%
	which contains the original outer bound.         
	
	As discussed above, for all the $\betav_1,\betav_2$ that do not simultaneously satisfy $\left\{
	\begin{array}{l}
	\betav_1^\T \pv_{\overbar{00}}\le(\onev- \betav_2)^\T\pv_{10,11}\\
	\betav_2^\T \pv_{\overbar{00}}\le(\onev- \betav_1)^\T\pv_{01,11}
	\end{array}\right.$, one can construct a new pair $(\betav_1^*,\betav_2^*)$ such that the two
	constraints are both satisfied while the new outer bound
	$\Rc_{\text{2EBC}}^o(\betav_1^*,\betav_2^*)$ contains the original outer bound region. In
	all cases, the larger outer bound region $\Rc_{\text{2EBC}}^o(\betav_1^*,\betav_2^*)$ is contained in the achievable region
	$\Rc_{\text{2EBC}}(\alphav_1^*,\alphav_2^*,\alphav_{1\otimes2}^*)$ by letting
	$\alphav_{1\otimes2}^*=\min\{\betav_1^*,\betav_2^*\},
	\alphav_1^*=\max\{\betav_1^*-\betav_2^*,\zerov
	\},
	\alphav_2^*=\max\{\betav_2^*-\betav_1^*,\zerov\}$, which completes the proof of equivalence between two-receiver achievable rate region and the outer bound region in Theorem \ref{the:two_region}, and therefore demonstrates the achievability of Theorem \ref{the:two_region}.

\subsection{Sketch of Proof of Proposition \ref{pro:two_region_example}}
Since the two vehicles are spatially independent and symmetric, we first write the symmetric rate as
\begin{align}
	{R_{\text{sym}}^{\text{mixed}}}=\mathop {\min}\limits_{\mu  \ge 1} \frac{\onev^\T \max\bigl\{\pv_{\overbar{00}}, \mu
		\,\pv_{01,11}\bigr\}}{{1 + \mu }}.\label{eq:def_Rsym}
\end{align}
It is not hard to justify that the optimal $\mu$ is among the following four possible candidates $[\mu_1,\mu_2,\mu_3,\mu_4]^T=\pv_{\overbar{00}}./\pv_{01,11}$, $\pv_{\overbar{00}}, \pv_{01,11}$ are derived according to equations \eqref{eq:prob_vec} and \eqref{eq:def_S}-\eqref{eq:def_cond_S2}.  A fixed ordering $\mu_2\le\mu_4\le\mu_1\le\mu_3$ can be expected, and therefore,
\begin{align}
{R_{\text{sym}}^{\text{mixed}}}=\min\left\{{R_{\text{sym},1},R_{\text{sym},2},R_{\text{sym},3},R_{\text{sym},4}}\right\},
\end{align}
where 
	\begin{align}
		&{R_{\text{sym},1}}= \frac{{1 - {e^{ - 2\lambda A}}}}{{\frac{{\left( {{e^{ - \lambda A}} - 1} \right)\left( {\frac{{{e^{ - \lambda \left( {A + {A_k}} \right)}}\left( {{e^{ - \lambda {A_k}}} - 1} \right)}}{{{e^{ - \lambda A}} - 1}} - 1} \right)}}{{{e^{ - \lambda \left( {A + {A_k}} \right)}} - 2{e^{ - \lambda A}} + 1}} + 1}},\label{eq:Rsym1}\\[8pt]
		&{R_{\text{sym},2}} = - \frac{{{\sigma _2}\sigma _3^2 + {e^{ - 2\lambda A}}\left( {{e^{ - 2\lambda {A_k}}} - 1} \right) - {e^{ - \lambda A}}{\sigma _2}{\sigma _3}}}{{\frac{{{\sigma _2}{\sigma _3}}}{{{\sigma _1} - 2{e^{ - \lambda A}} + 1}} + 1}}\nonumber\\
		&\qquad\qquad  + \frac{{{e^{ - \lambda A}}\left( {{\sigma _1}{\sigma _4} - {\sigma _3}} \right)}}{{\frac{{{\sigma _2}{\sigma _3}}}{{{\sigma _1} - 2{e^{ - \lambda A}} + 1}} + 1}},\label{eq:Rsym2}\\[8pt]
		&{R_{\text{sym},3}} =\! = \!\frac{{{e^{\lambda A}} \!-\! {e^{\lambda {A_k}}}}}{{2{e^{\lambda(A + {A_k})}}\! +\! {e^{\lambda (2A + {A_k})}}}} \!+ \!\frac{{(1 \!-\! {e^{ - \lambda A}})(1\! +\! {e^{ - \lambda A_k}})}}{{2\! + \!{e^{ - \lambda {A_k}}}}},\label{eq:Rsym3}\\[8pt]
	&{R_{\text{sym},4}} = \frac{{\left( {1 - {e^{ - \lambda A}}} \right){\sigma _5}}}{{\left( {{\sigma _5} + {e^{ - \lambda {A_k}}} - 1} \right)}},\label{eq:Rsym4}\\[8pt]
	&{\sigma _1} = {e^{ - \lambda \left( {A + {A_k}} \right)}},\\
	&{\sigma _2} = \frac{{{e^{ - 2\lambda A}}\sigma _4^2}}{{\sigma _3^2}} - 1,\\
	&{\sigma _3} = {e^{ - \lambda A}} - 1,\\
		&{\sigma _4} = {e^{ - \lambda A_k}} - 1,\\
	&{\sigma _5} = \frac{{{e^{ - \lambda \left( {A + {A_k}} \right)}}\left( {{e^{ - \lambda {A_k}}} - 1} \right)}}{{{e^{ - \lambda A}} - 1}} - 1.
	\end{align}

However, it is not easy to compare $R_{\text{sym},i}, i\in\left\{1,2,3,4\right\}$ in general cases, therefore, in Proposition~\ref{pro:two_region_example}, we give the conditions $1/3<{e^{-\lambda A}}<4/5$ and ${A_k} > {{\ln \left( {\frac{{{e^{\lambda A}} + 1}}{{3{e^{\lambda A}} - {e^{2\lambda A}}}}} \right)} \mathord{\left/
		{\vphantom {{\ln \left( {\frac{{{e^{\lambda A}} + 1}}{{3{e^{\lambda A}} - {e^{2\lambda A}}}}} \right)} \lambda }} \right.
		\kern-\nulldelimiterspace} \lambda }$, under which $R_{\text{sym},3}$ can be verified to be the smallest one. To emphasize,  ${R_{\text{sym}}^{\text{mixed}}}$ is not less than $R_{\text{sym}}^{\text{FB}}$ in all cases, here we just show a simple form of ${R_{\text{sym}}^{\text{mixed}}}$ under the specific conditions, so that one can compare ${R_{\text{sym}}^{\text{mixed}}}$ and $R_{\text{sym}}^{\text{FB}}$ explicitly and therefore has an intuitive understanding of the net performance gain obtained from the location information. 
\bibliographystyle{IEEEtran}
\bibliography{IEEEabrv,GC_2018}

\end{document}